\def\ps@pprintTitle{%
 \let\@oddhead\@empty
 \let\@evenhead\@empty
 \def\@oddfoot{}%
 \let\@evenfoot\@oddfoot}
\numberwithin{equation}{section}
\newtheorem{Theorem}{Theorem}
\newtheorem{Lemma}{Lemma}
\newtheorem{Proposition}{Proposition}
\newtheorem{Corollary}{Corollary}
\newtheorem{Remark}{Remark}
\newtheorem{Assumption}{Assumption}
\newtheorem{Definition}{Definition}
\newenvironment{proof}[1][Proof]{\textbf{#1.} }{\ \rule{0.5em}{0.5em}}
\renewcommand{\theequation}{\thesection.\arabic{equation}}
\newcommand\numberthis{\addtocounter{equation}{1}\tag{\theequation}}
\newcommand{\EE}{{\mathbb E}}
\newcommand{\NN}{{\mathbb N}}
\newcommand{\RR}{{\mathbb R}}
\newcommand{\PP}{{\mathbb P}}
\newcommand{\LL}{{\mathbb L}}
\newcommand{\ind}{\mathds{1}}
\newcommand{\As}{{\mathcal A}}
\newcommand{\Ms}{{\mathcal M}}
\newcommand{\Ft}{{\mathcal F}}
\newcommand{\FF}{{\mathfrak F}}
\newcommand{\ones}{\mathbf{1}}
\newcommand{\zeros}{\mathbf{0}}
\newcommand{\vb}{{\boldsymbol b}}
\newcommand{\vg}{{\boldsymbol g}}
\newcommand{\vx}{{\boldsymbol x}}
\newcommand{\vM}{{\boldsymbol M}}
\newcommand{\vpi}{{\boldsymbol \pi}}
\newcommand{\valpha}{{\boldsymbol \alpha}}
\newcommand{\vtheta}{{\boldsymbol \theta}}
\newcommand{\vsigma}{{\boldsymbol \sigma}}
\newcommand{\vomega}{{\boldsymbol \omega}}
\newcommand{\mSigma}{\mathbf \Sigma}
\newcommand{\mx}{{\mathbf{x}}}
\newcommand{\mfN}{{\mathfrak{N}}}
\newcolumntype{L}[1]{>{\raggedright\let\newline\\\arraybackslash\hspace{0pt}}m{#1}}
\newcolumntype{C}[1]{>{\centering\let\newline\\\arraybackslash\hspace{0pt}}m{#1}}
\newcolumntype{R}[1]{>{\raggedleft\let\newline\\\arraybackslash\hspace{0pt}}m{#1}}
\begin{document}


\begin{frontmatter}

\setlength{\parskip}{0em}

\title{\textbf{A Variational Analysis Approach to Solving the Merton Problem}
\\[1em]}
\tnotetext[t1]{SJ and AA would like to acknowledge the support of the Natural Sciences and Engineering Research Council of Canada (NSERC). For SJ, under the funding reference numbers RGPIN-2018-05705 and RGPAS-2018-522715. }

\author{Ali Al-Aradi}
\ead{ali.al.aradi@utoronto.ca}

\author{Sebastian Jaimungal}
\ead{sebastian.jaimungal@utoronto.ca}
\address{Department of Statistical Sciences, University of Toronto}

\begin{abstract}
We address the Merton problem of maximizing the expected utility of terminal wealth using techniques from variational analysis. Under a general continuous semimartingale market model with stochastic parameters, we obtain a characterization of the optimal portfolio for general utility functions in terms of a forward-backward stochastic differential equation (FBSDE) and derive solutions for a number of well-known utility functions. Our results complement a previous study conducted in \cite{ferland2008fbsde} on optimal strategies in markets driven by Brownian noise with random drift and volatility parameters.
\end{abstract}

\begin{keyword}
Merton problem;
Portfolio selection;
Stochastic control;
Convex analysis;
Variational analysis.
\end{keyword}

\end{frontmatter}

\section{Introduction}

The Merton problem is among the most well-known and well-studied problems in mathematical finance. Introduced in the seminal works of \cite{Merton69} and \cite{Merton71}, the problem is one of dynamic asset allocation and consumption in which an investor chooses to allocate their wealth between a risk-free asset and a risky asset with the goal of maximizing expected utility of consumption and terminal wealth. Although the two initial papers consider a number of variations to the problem such as
stochastic additions to wealth other than capital gains (e.g. wages), making the risk-free asset defaultable and using alternatives to geometric Brownian motion for modeling asset price behavior, the papers still managed to spawn numerous extensions in other directions. For example, the incorporation of transaction costs in \cite{Magill76} and \cite{Davis90}, uncertain investment horizon in \cite{BlanchetScaillet2008}, taxes to capital gains in \cite{tahar2010merton} and illiquid assets in \cite{Ang2014}. A number of works also consider variations of the Merton problem with partial information, e.g. \cite{bauerle2007} and \cite{frey2012}. The Merton problem also plays a role when utility maximization is not the direct goal, as is the case with indifference pricing; see \cite{henderson2002valuation} for the valuation of claims on non-traded assets and \cite{henderson2004utility} for a broader survey of the topic.

The Merton problem is also interesting due to the multitude of approaches adopted for solving the different variations of the problem. The original works of \cite{Merton69} and \cite{Merton71}, along with the majority of subsequent papers, tackle the problem via dynamic programming and solving the pertinent Hamilton-Jacobi-Bellman equations. Other techniques that have been adopted include the use of martingale methods and duality theory to solve the problem in incomplete financial markets as in \cite{karatzas1991martingale}; see also \cite{schachermayer2002optimal} for a survey of this topic. \cite{cvitanic1992convex} uses similar techniques to solve a constrained  version of the portfolio optimization problem. More recently, \cite{ferland2008fbsde} utilizes the stochastic Pontryagin maximum principle to characterize the optimal portfolio in terms of a forward-backward stochastic differential equation (FBSDE) for a market with random parameters driven by Brownian noise and a general utility function. \cite{rieder2012robust} apply the same technique to solving a robust version of the same problem where utility maximization is performed under the worst-case parameter configuration.

In this paper we focus on the investment portion of the Merton problem where the investor seeks to maximize the expected utility of terminal wealth. The approach we take to solve the utility maximization problem is based on techniques from variational analysis as discussed in, for example, \cite{ekeland1999convex}. Similar techniques were applied by \cite{bank2019optimal} to solve the Merton problem in the presence of price impact, as well as \cite{bank2017hedging} and \cite{casgrain2018algorithmic} in the contexts of hedging contingent claims in the presence of price impact and a mean field game approach to optimal execution, respectively. To our knowledge, the simple version of the Merton problem has not been tackled with this approach, which brings new insight to this well-studied problem and may be useful for solving other related problems, including the extension of \cite{alaradi2019active} to incorporate general utility functions.

\section{Model Setup} \label{sec:modelSetup}

\subsection{Market Model}

Let $(\Omega, \Ft, \FF, \PP)$ be a filtered probability space, where $\FF = \{\Ft_t\}_{t \geq 0}$ is the natural filtration generated by all processes in the model. We  assume that the market consists of $n$ risky assets and one risk-free asset which are defined as follows:
\begin{Definition}
The \textbf{stock price process} for risky asset $i$, $P^i = \left( P^i_t \right)_{t \geq 0}$ for all $i\in\mfN \coloneqq \{1,\dots,n\}$, is a positive semimartingale satisfying the stochastic differential equation (SDE):
\begin{equation}
dP^i_t =  \alpha^i_t P_t^i \, dt + P_t^i \, dM^i_t \, , \qquad P^i_0 = p^i \, ,
\end{equation}
where $\alpha^i = \left( \alpha^i_t \right)_{t \geq 0}$ is an $\FF$-adapted process representing the asset's \textbf{instantaneous rate of return} and $M^i = \left( M^i_t \right)_{t \geq 0}$ is an $\FF$-martingale with $M^i_0 = 0$ representing the asset's \textbf{noise component}. The \textbf{risk-free asset price process}, $P^0 = \left( P^0_t \right)_{t \geq 0}$ is a positive semimartingale satisfying the SDE
\begin{equation}
dP^0_t = r_t P_t^0 \, dt \, , \qquad P^0_0 = p^0 \, ,
\end{equation}
where $r = \left( r_t \right)_{t \geq 0}$ is an $\FF$-adapted process representing the risk-free rate.
\end{Definition}

Next, we specify the assumptions made on the various market model processes. To this end, we first define the spaces
{ \small
\begin{align}
\LL_T^p(\RR^n) &= \left\{ f: \Omega \times [0,T] \rightarrow \RR^n \text{ s.t. }  \EE \left[ \int_{0}^{T} \left(\| f_t \|_p\right)^p ~dt \right] < \infty \right\}\,, \quad 0<p<\infty \, ,
\\
\text{and } ~ \LL_T^{\infty,M}(\RR^n) &= \left\{ f: \Omega \times [0,T] \rightarrow \RR^n \text{ s.t. }  \sup_{t\in[0,T]}\| f_t \|_\infty \leq M, \; \PP-a.s. \right\} \, ,
\end{align}
}%
where $\| \vx \|_p \coloneqq \left(\sum_{i=1}^n |x_i|^p \right)^{1/p}$ and $\| \vx \|_\infty \coloneqq \underset{i \in \mfN}{\max} \hspace{0.1cm} |x_i|$ for $\vx \in \RR^n$  denote the $p$-norm and $\infty$-norm on $\RR^n$, respectively. Furthermore, we will make use of the shorthand notation $\| \vx \| \coloneqq \| \vx \|_2$ to denote the usual Euclidean norm.
\begin{Assumption} \label{asmp:growthMtg}
	The risk-free rate and rate of return processes are continuous and bounded, i.e. $r \in \LL_T^{\infty,M}(\RR)$, $ \valpha \in \LL_T^{\infty,M}(\RR^n)$, where $\valpha_t = \left(\alpha^1_t, ..., \alpha^n_t \right)^\intercal$. Additionally, the martingale noise processes are assumed to be continuous with finite second moments, i.e. $\EE \left( \| \vM_t \|^2 \right) < \infty$ for all $t \geq 0$.
\end{Assumption}

We also assume that the quadratic co-variation processes associated with the noise component satisfy
\begin{Assumption} \label{asmp:posDef}
Let $\mSigma$ be the matrix whose $ij$-th element is the \textbf{quadratic covariation} process between $M^i$ and $M^j$, $\mSigma^{ij}_t \coloneqq \langle M^i, M^j \rangle_t$. We assume that, for each $\mx \in \RR^n$, there exists $\varepsilon > 0$ and $C < \infty$    such that
\begin{align}
\varepsilon \| \mx \|^2 ~\leq~ \mx^\intercal \mSigma_t \mx  ~\leq~ C \| \mathbf{x} \|^2\,, \qquad
\forall t\ge0.
\end{align}
\end{Assumption}
This is an extension of the usual \textbf{non-degeneracy} and \textbf{bounded variance} conditions. Note that since $\vM$ is continuous, $\mSigma$ is a continuous process as well.


Next, we define portfolio processes which will constitute the investor's control in the optimization problem.
\begin{Definition}
A \textbf{portfolio} is an $\FF$-predictable, vector-valued process $\vpi = ( \vpi_t )_{t \geq 0}$, with $\vpi \in \LL^2_T(\RR^n)$ where $\vpi_t = \left(\pi^1_t, ..., \pi^n_t \right)^\intercal$ such that for all $t \geq 0$, $\pi_t^i$ represents the proportion of wealth invested in risky asset $i$ and $\pi_t^0 = 1 - \pi^1_t - \cdots - \pi^n_t$ is the proportion invested in the risk-free asset. We denote the set of all portfolios by
\begin{equation}
\As = \Big\{ \vpi:\Omega \times [0,T] \rightarrow \RR^n \text{ s.t. } \vpi\in\LL_T^2(\RR^n),~  \FF\text{-predictable} \Big\} \, .
\end{equation}
\end{Definition}

Given the model dynamics and portfolio assumptions, the \textbf{portfolio value process} $X^\vpi = (X^\vpi_t )_{t \geq 0}$ associated with an arbitrary portfolio $\vpi$  satisfies the SDE
\begin{equation} \label{eqn:portSDE}
d X^\vpi_t  = X^\vpi_t \big(r_t + \vpi_t^\intercal \vtheta_t \big) \, dt + X^\vpi_t \vpi_t^\intercal \, d\vM_t\,, \qquad X_0^\vpi = x > 0 \, ,
\end{equation}
where $\vtheta_t = \valpha_t - r_t \ones$ is the vector of excess returns, $\ones$ is a vector of ones and $x$ is the investor's initial wealth. It will also be convenient at times to work with the logarithm of wealth, which satisfies the SDE
\begin{align} \label{eqn:logWealth}
d \log X_t^\vpi = \gamma_t^\vpi \, dt + \vpi_t^\intercal \, d\vM_t \, ,
\end{align}
where $\gamma^{\vpi}_t = r_t + \vpi_t^\intercal \vtheta_t - \tfrac{1}{2} \vpi_t^\intercal \mSigma_t \vpi_t$ is the \textbf{portfolio growth rate}.

Due to some technical requirements that will come into play when solving the optimization problem, we initially restrict ourselves to strategies where wealth is transferred to the risk-free asset for the remainder of the investment horizon once a certain level of wealth is reached. To formalize this restriction, we first define the \textbf{$K$-stopped version} of $\vpi$ as the portfolio $\vpi_K = ( \vpi_{K,t} )_{t \geq 0}$ given by
\begin{equation*}
\vpi_{K,t} =
\begin{cases}
\vpi_t \, , & ~~ \text{\textit{if} } \left| X_t^\vpi \right| \leq K
\\
\zeros \, , & ~~ \text{\textit{if} } \left| X_t^\vpi \right| > K
\end{cases}
\end{equation*}
We also define the stopping time associated with reaching the wealth threshold, namely
\begin{equation}
\tau^{\vpi_K} = \inf \big\{ t \geq 0 : | X_t^\vpi | > K \big\} \, ,
\end{equation}
along with the associated indicator process $\ind^{\vpi_K} = \left( \ind^{\vpi_K}_t \right)_{t \geq 0} $ defined as
\begin{equation}
\ind^{\vpi_K}_t = \ind \left\{ \tau^{\vpi_K} \, > \, t \right\}  \, .
\end{equation}
Finally, we define the set of constrained portfolios for a given wealth threshold.
\begin{Definition}
The set of \textbf{admissible portfolios for the $K$-constrained problem} consists of those strategies that are stopped once the wealth threshold $K$ is reached, denoted
\begin{equation}
\As_K = \big\{ \vpi_K  : \,   \vpi \in \As \big\} \, .
\end{equation}
\end{Definition}

\section{Stochastic Control Problem} \label{sec:control}

The stochastic control problem we consider is the Merton problem without consumption. More specifically, the investor's objective is to determine the portfolio process $\vpi \in \As$ that maximizes their expected utility of terminal wealth at the end of their investment horizon $T$. In mathematical terms our \textbf{stochastic control problem} is to find the optimal portfolio $\vpi^*$ which, if the supremum is attained in the set of admissible strategies, achieves
\begin{equation} \label{eqn:control}
\underset{\vpi \in \As}{\sup} ~ H(\vpi) \, ,
\end{equation}
where $H: \As \rightarrow \RR$ is the \textbf{performance criteria} of an admissible portfolio $\vpi \in \As$ given by
\begin{align} \label{eqn:perfCrit}
H(\vpi) \coloneqq \EE \left[ U(X_T^\vpi) \right]\;.
\end{align}

We approach solving \eqref{eqn:control} by solving a sequence of nested constrained problems where the search space is reduced to $\As_K$, namely
\begin{equation} \label{eqn:constrained_control}
\underset{\vpi_K \in \As_K}{\sup} \, H_K(\vpi_K) \, ,
\end{equation}
where $H_K: \As_K \rightarrow \RR$ is the performance criteria of a $K$-stopped portfolio $\vpi_K \in \As_K$ given by
\begin{align} \label{eqn:constrained_perfCrit}
H_K(\vpi_K) \coloneqq \EE \left[ U(X_T^{\vpi_K}) \right]\;.
\end{align}

In the expressions above, $U$ is a von Neumann-Morgenstern utility function which reflects the investor's preferences and satisfies
\begin{Assumption}
The investor's utility function $U$ is three times continuously differentiable, increasing and strictly concave, i.e. $U^{(1)}(x) > 0$ and $U^{(2)}(x) < 0$ for all $x > 0$, where $U^{(k)}$ is the $k$th derivative of $U$.
\end{Assumption}
For convenience we will define the \textbf{utility process} $Z^\vpi = (Z^\vpi_t)_{t\geq0}$ as
\begin{equation}
Z_t^\vpi \coloneqq U(X_t^\vpi) \,.
\end{equation}

We proceed with solving the optimal control problem in four parts: (i) we establish the existence and uniqueness of a global optimizer for the stochastic control problems \eqref{eqn:control} and \eqref{eqn:constrained_control}; (ii) we compute the G\^{a}teaux derivative associated with the functional $H_K$; (iii) we find an element in the admissible set $\As_K$ which makes the derivative vanish and relate it to the solution of a FBSDE; (iv) we take the limit as $K$ tends to infinity to obtain an expression for $\vpi^*$.

For the remainder of the paper we address the constrained problem \eqref{eqn:constrained_control} with a fixed $K$ unless explicitly stated and we will omit the subscripts from the control processes for notational convenience.

\subsection{Existence and Uniqueness of a Global Maximum}

To show the existence and uniqueness of a global optimizer for \eqref{eqn:constrained_control}, we use the strict concavity of a related control problem that uses the dollar amount process as the investor's control and demonstrate a one-to-one correspondence between the control processes of the two problems. An analogous argument can be used to prove a similar result for \eqref{eqn:control}, which we omit.
\begin{Proposition} \label{prop:concave}
The stochastic control problem \eqref{eqn:constrained_control} has a unique global maximizer.
\end{Proposition}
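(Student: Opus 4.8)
The plan is to remove the nonlinearity that the proportion control $\vpi$ introduces into the wealth dynamics by reparametrising the problem in terms of the dollar amounts held in the risky assets, and then exploit the strict concavity of the utility. For an admissible $\vpi$ I would set $\vphi_t := X_t^\vpi\,\vpi_t$, so that \eqref{eqn:portSDE} becomes the \emph{linear} SDE
\[
dX_t = \big(r_t X_t + \vphi_t^\intercal \vtheta_t\big)\,dt + \vphi_t^\intercal\,d\vM_t,\qquad X_0 = x .
\]
Solving it with the integrating factor $e^{-\int_0^t r_s\,ds}$ writes the terminal wealth as an affine functional of the control, $X_T^\vphi = x\,e^{\int_0^T r_s ds} + \int_0^T e^{\int_s^T r_u du}\,\vphi_s^\intercal \vtheta_s\,ds + \int_0^T e^{\int_s^T r_u du}\,\vphi_s^\intercal\,d\vM_s$. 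I would first make the change of variables precise: because wealth is strictly positive under \eqref{eqn:logWealth}, the map $\vpi\mapsto\vphi$ is invertible via $\vpi_t = \vphi_t/X_t$, giving a one-to-one correspondence between $\As_K$ and a set $\widetilde{\As}_K$ of dollar-amount controls that leaves the objective unchanged, so that \eqref{eqn:constrained_control} is equivalent to maximising $\widetilde H_K(\vphi) := \EE[U(X_T^\vphi)]$ over $\widetilde{\As}_K$. Some care is needed here to transport the stopping rule through the correspondence and to check that $\widetilde{\As}_K$ carries the convex structure required below.

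The heart of the argument is to show that $\widetilde H_K$ is \emph{strictly} concave. Affineness of $\vphi\mapsto X_T^\vphi$ gives $X_T^{\vphi^\lambda} = \lambda X_T^{\vphi^1} + (1-\lambda) X_T^{\vphi^2}$ for $\vphi^\lambda = \lambda\vphi^1 + (1-\lambda)\vphi^2$, so strict concavity of $U$ yields, pointwise, $U(X_T^{\vphi^\lambda})\ge\lambda U(X_T^{\vphi^1}) + (1-\lambda)U(X_T^{\vphi^2})$ with strict inequality on $\{X_T^{\vphi^1}\neq X_T^{\vphi^2}\}$; taking expectations delivers strict concavity as soon as this event has positive probability. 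The step I expect to be the main obstacle is precisely this injectivity of the control-to-terminal-wealth map: that $X_T^{\vphi^1} = X_T^{\vphi^2}$ a.s. forces $\vphi^1 = \vphi^2$. The drift $\vphi^\intercal\vtheta$ blocks a direct appeal to the It\^o isometry, so my plan is to eliminate it by a Girsanov change of measure. Setting $\vlambda_t = \mSigma_t^{-1}\vtheta_t$, which is well defined and bounded by the non-degeneracy in Assumption \ref{asmp:posDef} and the boundedness in Assumption \ref{asmp:growthMtg}, the stochastic exponential of $-\int\vlambda^\intercal\,d\vM$ is a true martingale by a Novikov argument and defines $\mathbb{Q}\sim\PP$ under which $d\widetilde{\vM}_t := d\vM_t + \vtheta_t\,dt$ is a local martingale with the same covariation $\mSigma$. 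Writing $\psi = \vphi^1-\vphi^2$, the discounted wealth difference $e^{-\int_0^t r_s ds}(X_t^{\vphi^1}-X_t^{\vphi^2}) = \int_0^t e^{-\int_0^s r_u du}\,\psi_s^\intercal\,d\widetilde{\vM}_s$ is then a $\mathbb{Q}$-stochastic integral whose terminal value vanishes; the It\^o isometry under $\mathbb{Q}$ together with the lower bound $\psi^\intercal\mSigma\psi\ge\varepsilon\|\psi\|^2$ forces $\psi = 0$, $dt\times d\mathbb{Q}$-a.e., hence $\PP$-a.e. The remaining technical point is to secure the integrability (of the density and of $\psi$ under $\mathbb{Q}$) that makes the true-martingale property and the isometry legitimate for merely $\LL^2$ controls; here the uniform wealth bound coming from the $K$-stopping is what I would lean on.

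Given strict concavity, uniqueness is immediate, since two distinct maximizers would render the whole connecting segment optimal and contradict the strict inequality. For existence I would pass to the attainable terminal-wealth set $\{X_T^\vphi : \vphi\in\widetilde{\As}_K\}\subset L^2(\Omega)$: it is convex by affineness and uniformly bounded because the $K$-stopping freezes wealth in the risk-free asset once $|X|$ exceeds $K$, giving $|X_T^\vphi|\le K\,e^{MT}$. After checking it is closed, it is weakly compact in $L^2$, and $\xi\mapsto\EE[U(\xi)]$ is concave and upper semicontinuous for the norm, hence weakly upper semicontinuous, so the supremum is attained; pulling the maximiser back through the injective, objective-preserving correspondence produces the unique optimal $\vpi^*\in\As_K$. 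Establishing closedness of the attainable set and the integrability in the Girsanov step are the two places where I expect the real work to lie.
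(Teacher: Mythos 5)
Your proposal follows the same core route as the paper's own proof: pass to dollar-amount controls, observe that terminal wealth is then an affine functional of the control, and invoke strict concavity of $U$. The paper's auxiliary functional $J$ and admissible set $\As_K^*$ are exactly your $\widetilde H_K$ and $\widetilde{\As}_K$, and the paper adds a short case analysis (its Lemma \ref{lemma:convex_set}) showing that the $K$-stopping rule survives convex combinations --- the step you flag but do not carry out. Where you genuinely depart from, and in fact improve on, the paper is in the two points it glosses over. First, the paper asserts the strict inequality $U\bigl(X_T^{c\widetilde{\vpi}+(1-c)\widetilde{\vomega}}\bigr) > c\,U\bigl(X_T^{\widetilde{\vpi}}\bigr)+(1-c)\,U\bigl(X_T^{\widetilde{\vomega}}\bigr)$ without noting that it fails on the event $\{X_T^{\widetilde{\vpi}}=X_T^{\widetilde{\vomega}}\}$; your Girsanov-plus-isometry argument supplies precisely the missing injectivity of the control-to-terminal-wealth map. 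Moreover, the integrability worry you raise there dissolves for exactly the reason you anticipate: under $K$-stopping wealth is bounded by $Ke^{MT}$, so the discounted wealth difference is a \emph{bounded} $\mathbb{Q}$-local martingale, hence a true martingale vanishing at $T$, hence identically zero; its quadratic variation $\int_0^T e^{-2\int_0^t r_s\,ds}\,\psi_t^\intercal\mSigma_t\psi_t\,dt=0$ then gives $\psi=0$ by non-degeneracy, with no isometry for unbounded integrands needed. Second, existence: the paper simply declares that strict concavity of $J$ on the convex set $\As_K^*$ yields a unique global maximizer, but concavity plus convexity does not by itself give attainment of the supremum in infinite dimensions; your weak-compactness program (uniformly bounded convex attainable set in $L^2$, weak upper semicontinuity of $\xi\mapsto\EE[U(\xi)]$) is the right way to actually prove attainment, though its load-bearing step --- closedness of the attainable terminal-wealth set under the stopping constraint --- is nontrivial and is precisely work the paper never does. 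In short, once the convexity case analysis and the closedness argument are written out, your plan establishes strictly more than the paper's own proof.
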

\begin{proof}
We define an auxiliary control problem with the same performance criteria as \eqref{eqn:constrained_perfCrit} but where the control process is a vector of \textit{dollar amounts}, rather than \textit{proportions} of wealth, invested in each asset and denote the new control process by $\widetilde{\vpi} = ( \widetilde{\vpi}_t )_{t \geq 0}$ where $\widetilde{\vpi}_t = \left(\widetilde{\pi}^1_t, ..., \widetilde{\pi}^n_t \right)^\intercal$. In particular, we are interested in
\begin{equation} \label{eqn:auxControl}
\underset{\widetilde{\vpi} \in \As_K^*}{\sup} ~ J(\widetilde{\vpi}) \, ,
\end{equation}
where $J: \As_K^* \rightarrow \RR$ is given by
\[
J(\widetilde{\vpi}) \coloneqq \EE \left[ U \big( X_T^{\widetilde{\vpi}} \big) \right]
\]
and $\As_K^*$ is the set of admissible portfolios expressed in terms of dollar amounts. The control processes in the two optimization problems are related via
\[
\vpi^i_t = \frac{\widetilde{\pi}^i_t}{\widetilde{\pi}^1_t + \cdots + \widetilde{\pi}^n_t} \qquad \text{for } i \in \mathfrak{N} \, ,
\]
or through the wealth process as follows
\[
X_t^\vpi \vpi_t = \widetilde{\vpi}_t \, .
\]
Given a fixed initial wealth and using the fact that portfolios are self-financing, there exists a one-to-one mapping between $\As_K$ and $\As_K^*$. Additionally, the numerical value of the two functionals $H$ and $J$ are equal when taking two controls that map to one another. This implies that if the auxiliary control problem has a unique global maximizer then so does the control problem \eqref{eqn:constrained_control}.

To show that the auxiliary control problem has a unique solution we show that the functional $J$ is strictly concave in the control $\widetilde{\vpi}$ and that the search space $\As_K^*$ is a convex set. This is done in two separate lemmas.
\begin{Lemma}
	The functional $J$ is strictly concave in the dollar amount process $\widetilde{\vpi}$.
\end{Lemma}
\begin{proof}
The wealth process controlled via the dollar amounts follows the dynamics
\begin{equation} \label{eqn:dollarSDE}
d X^{\widetilde{\vpi}}_t  = \big( r_t X^{\widetilde{\vpi}}_t + \widetilde{\vpi}_t^\intercal \vtheta_t \big) \, dt + \widetilde{\vpi}_t^\intercal \, d\vM_t\,, \qquad X_0^{\widetilde{\vpi}} = x > 0 \, .
\end{equation}
This process can be linearized by defining $\kappa_{0,t} X^{\widetilde{\vpi}}_t$ where $\kappa_{s,t} = e^{-\int_s^t r_u du}$. This process satisfies
\begin{align*}
d \left( \kappa_{0,t} X^{\widetilde{\vpi}}_t \right) &= \kappa_{0,t} \, d X^{\widetilde{\vpi}}_t - r_t \kappa_{0,t} X^{\widetilde{\vpi}}_t \, dt
\\
&= \kappa_{0,t} \widetilde{\vpi}_t^\intercal \vtheta_t \, dt + \kappa_{0,t} \widetilde{\vpi}_t^\intercal \, d\vM_t \, .
\end{align*}
Integrating we obtain an expression for $X^{\widetilde{\vpi}}_T$
\[
X^{\widetilde{\vpi}}_T = x \, \kappa_{T,0} + \int_0^T \kappa_{T,t} \widetilde{\vpi}_t^\intercal \vtheta_t \, dt + \int_0^T \kappa_{T,t} \widetilde{\vpi}_t^\intercal \,  d\vM_t \,.
\]
Since this expression is linear in $\widetilde{\vpi}$, for any $\widetilde{\vpi}, \widetilde{\vomega} \in \As_K^*$ and $c \in [0,1]$ we have
\[
X^{c\widetilde{\vpi} + (1-c) \widetilde{\vomega}}_T = c X^{\widetilde{\vpi}}_T + (1-c)  X^{\widetilde{\vomega}}_T \, ,
\]
and by the strict concavity of $U$ it follows that
\[
U \left( X^{c\widetilde{\vpi} + (1-c) \widetilde{\vomega}}_T \right) > c ~ U ( X^{\widetilde{\vpi}}_T ) + (1-c) ~ U ( X^{\widetilde{\vomega}}_T ) \, .
\]
Taking expectations establishes the strict concavity of the functional $J$.
\end{proof}
\begin{Lemma} \label{lemma:convex_set}
	The search space $\As_K^*$ for the auxiliary control problem \eqref{eqn:auxControl} is a convex set.
\end{Lemma}
\begin{proof}
	Fix $c \in [0,1]$ and $\widetilde{\vpi}_1, \widetilde{\vpi}_2 \in \As_K^*$ and consider the convex combination $\widetilde{\vomega} = c \widetilde{\vpi}_1 + (1-c) \widetilde{\vpi}_2$. The goal is to show that this is an admissible dollar amount process. Clearly, $\widetilde{\vomega}$ is $\FF$-predictable and corresponds to a portfolio process in $\As_K$, so all that remains is to show that it is stopped once $X^{\widetilde{\vomega}}$ hits the wealth threshold $K$.
	
	Recall from the proof of the previous lemma that $X^{\widetilde{\vomega}} = c X^{\widetilde{\vpi}_1} + (1-c) X^{\widetilde{\vpi}_2}$. Moreover, if a subportfolio $\widetilde{\vpi}_i$ is not stopped then the absolute value of its associated wealth is necessarily less than the threshold $K$. With this in mind, there are three cases to consider: neither of the subportfolios is stopped, exactly one subportfolio is stopped or both subportfolios are stopped. In the first two cases it can be easily verified that $|X^{\widetilde{\vomega}}| \leq K$ and in the last case $ | X^{\widetilde{\vomega}} | = K$ and the portfolio $\widetilde{\vomega}$ is stopped since both $\widetilde{\vpi}_1$ and $\widetilde{\vpi}_2$ are stopped. Therefore, $\widetilde{\vomega} \in \As_K^*$ and hence $\As_K^*$ is a convex set.
\end{proof}

Since the functional $J$ is strictly concave and the set $\As_K^*$ is convex, the auxiliary control problem \eqref{eqn:auxControl} has a unique global maximizer and therefore so does the constrained control problem \eqref{eqn:constrained_control}. This completes the proof.
\end{proof}

\subsection{The G\^{a}teaux derivative}

In this section we derive an expression for the G\^{a}teaux derivative of the functional $H_K$ in a number of incremental steps, starting with the following lemma:
%
%
\begin{Lemma} \label{lemma:perturb}
Fix $\epsilon > 0$ and two portfolio processes $\vpi, \vomega \in \As_K$ and define the following processes:
\begin{subequations}
\begin{align}
F_t^{\vpi, (k)} &\coloneqq U^{(k)}(X_t^{\vpi }) (X_t^\vpi)^k ~~ \text{ for } k = 1,2,3 \, ,
\\
I^\vomega_t &\coloneqq \int_0^t \vomega_u^\intercal \left( \vtheta_u - \mSigma_u \vpi_u \right) \, du + \int_0^t \vomega_u^\intercal \, d\vM_u \label{eqn:integral} \, ,
\\
\vg^\vpi_t &\coloneqq F_t^{\vpi,(1)} \vtheta_t + F_t^{\vpi,(2)} \mSigma_t \vpi_t \, ,
\\
h^\vpi_t &\coloneqq \Big( F_t^{\vpi,(1)} + F_t^{\vpi,(2)} \Big) (r_t + \vpi_t^\intercal  \vtheta_t) + \Big( F_t^{\vpi,(2)} + \tfrac{1}{2} F_t^{\vpi,(3)} \Big) \vpi_t^\intercal \mSigma_t \vpi_t \, .
\end{align}
\end{subequations}
Then we have
\begin{equation} \label{eqn:gateaux1}
H_K(\vpi + \epsilon \vomega) = \EE \left[ Z^\vpi_{T \wedge \tau^{\vpi + \epsilon \vomega} }\right] + \epsilon ~ \EE \left[ \int_0^{ T \wedge \tau^{\vpi + \epsilon \vomega} } \Big( \vomega_t^\intercal \vg^\vpi_t + I_t^\vomega h^\vpi_t \Big) \, dt \right] + o(\epsilon) \,.
\end{equation}
\end{Lemma}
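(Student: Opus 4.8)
The plan is to relate the perturbed wealth to the base wealth through the log-wealth dynamics \eqref{eqn:logWealth}, perform a pointwise first-order Taylor expansion of $U$ at the terminal (stopped) time, and then recast the resulting terminal expectation into the stated integral form via Itô's product rule. Writing $\sigma = \tau^{\vpi + \epsilon \vomega}$ throughout, I first substitute $\vpi + \epsilon\vomega$ into the growth rate $\gamma^\vpi$; the quadratic term expands as $\gamma_t^{\vpi+\epsilon\vomega} = \gamma_t^\vpi + \epsilon\,\vomega_t^\intercal(\vtheta_t - \mSigma_t\vpi_t) - \tfrac12\epsilon^2\,\vomega_t^\intercal\mSigma_t\vomega_t$, so that from \eqref{eqn:logWealth} the drift correction and the martingale increment $\epsilon\,\vomega^\intercal d\vM$ assemble exactly into the process $I^\vomega$ of \eqref{eqn:integral}:
\[
\log X_t^{\vpi+\epsilon\vomega} - \log X_t^{\vpi} = \epsilon\, I_t^\vomega - \tfrac12\epsilon^2 \int_0^t \vomega_u^\intercal \mSigma_u \vomega_u\,du .
\]

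Exponentiating gives $X_t^{\vpi+\epsilon\vomega} = X_t^{\vpi}\big(1 + \epsilon I_t^\vomega + o(\epsilon)\big)$, so expanding $U$ about $X_{T\wedge\sigma}^{\vpi}$ yields
\[
U\!\left(X_{T\wedge\sigma}^{\vpi+\epsilon\vomega}\right) = U\!\left(X_{T\wedge\sigma}^{\vpi}\right) + \epsilon\, U^{(1)}(X_{T\wedge\sigma}^{\vpi})\,X_{T\wedge\sigma}^{\vpi}\, I_{T\wedge\sigma}^\vomega + o(\epsilon) = Z_{T\wedge\sigma}^{\vpi} + \epsilon\, F_{T\wedge\sigma}^{\vpi,(1)} I_{T\wedge\sigma}^\vomega + o(\epsilon).
\]
Taking expectations produces the zeroth-order term $\EE[Z_{T\wedge\sigma}^\vpi]$ and the first-order term $\epsilon\,\EE\big[F_{T\wedge\sigma}^{\vpi,(1)} I_{T\wedge\sigma}^\vomega\big]$, so it remains to identify the latter expectation with $\EE\big[\int_0^{T\wedge\sigma}(\vomega_t^\intercal\vg_t^\vpi + I_t^\vomega h_t^\vpi)\,dt\big]$.

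I would establish this by applying Itô's product rule to $F_t^{\vpi,(1)} I_t^\vomega$. The dynamics of $I^\vomega$ are immediate from its definition, and a direct Itô computation on $F^{\vpi,(1)} = U^{(1)}(X^\vpi)X^\vpi$ (using \eqref{eqn:portSDE} and the covariation rate $\mSigma_t$ implicit in \eqref{eqn:logWealth}) shows that the drift of $F^{\vpi,(1)}$ is exactly $h^\vpi$ and its martingale integrand is $(F^{\vpi,(1)}+F^{\vpi,(2)})\vpi^\intercal$. The resulting quadratic covariation contributes $(F^{\vpi,(1)}+F^{\vpi,(2)})\vpi_t^\intercal\mSigma_t\vomega_t\,dt$, which combines with the $-F^{\vpi,(1)}\vomega_t^\intercal\mSigma_t\vpi_t$ piece of $F^{\vpi,(1)}dI^\vomega$ (using symmetry of $\mSigma_t$) to leave $F^{\vpi,(2)}\vomega_t^\intercal\mSigma_t\vpi_t$; together with $F^{\vpi,(1)}\vomega_t^\intercal\vtheta_t$ this is precisely $\vomega_t^\intercal\vg_t^\vpi$. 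Since $I_0^\vomega = 0$, integrating to $T\wedge\sigma$ and taking expectations — the martingale part vanishing once its integrand is shown to lie in $\LL_T^2(\RR^n)$ — gives the integral identity. (Equivalently, one may apply Itô directly to $U(X^{\vpi+\epsilon\vomega})$, take expectations to kill the martingale, and expand the drift to first order in $\epsilon$; its $\partial_\epsilon$-derivative at $\epsilon=0$ collapses to $\vomega_t^\intercal\vg_t^\vpi + I_t^\vomega h_t^\vpi$, bypassing the product rule.)

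The hard part is controlling the stopping time $\sigma = \tau^{\vpi+\epsilon\vomega}$, which itself depends on $\epsilon$, and justifying the uniform $o(\epsilon)$ error. Two points need care. First is the identification of $H_K(\vpi+\epsilon\vomega)$ with $\EE[U(X_{T\wedge\sigma}^{\vpi+\epsilon\vomega})]$ and the treatment of wealth on $\{\sigma < T\}$ once the threshold is reached: evaluating the base process at $T\wedge\sigma$ rather than at $T$ is exactly what aligns the two sides and absorbs the post-threshold behaviour, and the convergence $\sigma \to \tau^\vpi$ as $\epsilon\to 0$ must be handled. Second, the remainder contributions (the $\epsilon^2$ term in the exponent, the second-order Taylor remainder involving $U^{(2)}$, and the martingale integrands) must be dominated uniformly in $\epsilon$. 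This is where the $K$-truncation is essential: on $[0,T]$ the constrained wealth stays bounded (above by $K e^{MT}$, since $r\in\LL_T^{\infty,M}(\RR)$), which with the $C^3$-regularity of $U$ yields the integrability of each $F^{\vpi,(k)}$, while $\vpi,\vomega\in\LL_T^2(\RR^n)$ together with the bounds on $\vtheta$ and $\mSigma$ from Assumptions \ref{asmp:growthMtg}--\ref{asmp:posDef} give the $L^2$ control of $I^\vomega$ and of the stochastic integrals. Assembling these bounds through dominated convergence delivers the $o(\epsilon)$ error and completes the argument.
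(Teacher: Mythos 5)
Your proposal is correct in outline, and its second half takes a genuinely different route from the paper's. The opening steps coincide: the paper perturbs the growth rate and the log-wealth in exactly the same way to obtain $X_t^{\vpi+\epsilon\vomega}=X_t^{\vpi}\bigl(1+\epsilon I_t^{\vomega}+o(\epsilon)\bigr)$ (your explicit $-\tfrac12\epsilon^2\int_0^t\vomega_u^\intercal\mSigma_u\vomega_u\,du$ remainder is a harmless refinement). From there, however, the paper Taylor-expands each auxiliary process, $F_t^{\vpi+\epsilon\vomega,(k)}=F_t^{\vpi,(k)}+\epsilon I_t^{\vomega}\bigl(kF_t^{\vpi,(k)}+F_t^{\vpi,(k+1)}\bigr)+o(\epsilon)$, substitutes into the It\^{o} dynamics of the perturbed utility process $Z^{\vpi+\epsilon\vomega}$, and collects the $O(\epsilon)$ drift $\vomega_t^\intercal\vg_t^{\vpi}+I_t^{\vomega}h_t^{\vpi}$ and martingale integrand $F_t^{\vpi,(1)}\vomega_t^\intercal+I_t^{\vomega}\bigl(F_t^{\vpi,(1)}+F_t^{\vpi,(2)}\bigr)\vpi_t^\intercal$ before integrating to $T\wedge\tau^{\vpi+\epsilon\vomega}$ and taking expectations. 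You instead expand $U$ once, at the stopped terminal time, and convert $\EE\bigl[F^{\vpi,(1)}_{T\wedge\tau^{\vpi+\epsilon\vomega}}I^{\vomega}_{T\wedge\tau^{\vpi+\epsilon\vomega}}\bigr]$ into the time-integral form via It\^{o}'s product rule on $F^{\vpi,(1)}I^{\vomega}$. This is algebraically equivalent---your $d\bigl(F^{\vpi,(1)}I^{\vomega}\bigr)$ is precisely the paper's $O(\epsilon)$ differential, and the identity $dF_t^{\vpi,(1)}=h_t^{\vpi}\,dt+\bigl(F_t^{\vpi,(1)}+F_t^{\vpi,(2)}\bigr)\vpi_t^\intercal\,d\vM_t$ that you invoke is one the paper itself proves later, in Lemma~\ref{lemma:M_process}---but your organization is cleaner and makes the origin of $\vg^{\vpi}$ and $h^{\vpi}$ transparent; it also keeps the $\epsilon$-dependent stopping time harmlessly in place, just as the paper's statement does (the limit $\tau^{\vpi+\epsilon\vomega}\to\tau^{\vpi}$ is only needed in Proposition~\ref{prop:gateaux}). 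One warning: the step you relegate to a parenthetical remark, namely that the stochastic-integral term has zero expectation, is the genuine technical content of this lemma, and your proposed justification (show the integrand lies in $\LL_T^2(\RR^n)$) does not go through directly. The integrand contains the product $I_t^{\vomega}\vpi_t$, and with $\vpi$ merely in $\LL_T^2(\RR^n)$ one cannot conclude $\EE\int_0^T (I_t^{\vomega})^2\|\vpi_t\|^2\,dt<\infty$, since the relevant bound involves a product of two random variables that are each integrable but whose product need not be. The paper devotes a separate result (Lemma~\ref{prop:zeroMeanCondition}) to exactly this point, arguing instead that the integral is a continuous local martingale whose quadratic variation is a.s.\ finite---using the $K$-truncation to bound $F^{\vpi,(k)}$ and path-continuity to bound $\sup_{t}|I_t^{\vomega}|$ pathwise---and then localizing; your proof needs this, or an equivalent argument, in full.
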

\begin{proof}
First, using \eqref{eqn:portSDE} and the definition of $F_t^{\vpi, (k)}$, we obtain the dynamics of $Z_t^\vpi$ by applying It\^{o}'s lemma which gives
{ \small
\begin{align} \label{eqn:utilitySDE}
dZ_t^\vpi &= \Big( F_t^{\vpi, (1)} \big(r_t + \vpi_t^\intercal \vtheta_t \big) + \tfrac{1}{2} F_t^{\vpi, (2)} \vpi_t^\intercal \mSigma_t \vpi_t \Big) \, dt + F_t^{\vpi, (1)} \vpi_t^\intercal \, d\vM_t \, .
\end{align} }%
We are interested in the dynamics of the perturbed utility process $Z^{\vpi + \epsilon \vomega}$, i.e. the utility process induced by the control $\vpi + \epsilon \vomega$. It is important to note that in order for $\vpi + \epsilon \vomega$ to be an admissible portfolio it must be stopped in the usual manner once its associated wealth process reaches the threshold $K$. To arrive at the SDE satisfied by this process, we first perturb the growth rate $\gamma^\vpi$, then the wealth process $X^\vpi$ and the auxiliary processes $F^{\vpi,(1)}$ and $F^{\vpi,(2)}$, and finally $Z^\vpi$. In each step, we write the perturbed process as a sum of the unperturbed process and an adjustment term that is linear in $\epsilon$ plus higher order terms. The perturbed growth rate is
\begin{align*}
\gamma^{\vpi + \epsilon \vomega}_t &= r_t + (\vpi_t + \epsilon \vomega_t)^\intercal \vtheta_t -  \tfrac{1}{2} (\vpi_t + \epsilon \vomega_t)^\intercal \mSigma_t (\vpi_t + \epsilon \vomega_t)
\\
&= r_t + \vpi_t^\intercal \vtheta_t + \epsilon \, \vomega_t^\intercal \vtheta_t -  \tfrac{1}{2} \Big[ \vpi_t^\intercal \mSigma_t \vpi_t + 2 \epsilon \vomega_t^\intercal \mSigma_t \vpi_t + o(\epsilon) \Big]
\\
&= \gamma^\vpi_t + \epsilon \, \vomega_t^\intercal \left( \vtheta_t - \mSigma_t \vpi_t \right) + o(\epsilon) \, .
\end{align*}
Next, we derive the perturbed wealth process using the perturbed growth rate process obtained above. Substituting, we have
\begin{align*}
d \log X^{\vpi + \epsilon \vomega}_t &= \gamma_t^{\vpi + \epsilon \vomega} \, dt + (\vpi_t + \epsilon \vomega_t)^\intercal \, d\vM_t
\\
&= \Big( \gamma^\vpi_t + \epsilon \, \vomega_t^\intercal \left( \vtheta_t - \mSigma_t \vpi_t \right) + o(\epsilon) \Big) \, dt + \vpi_t^\intercal ~d\vM_t + \epsilon \, \vomega_t^\intercal \, d\vM_t
\\
&= d \log X_t^\vpi + \epsilon ~ \Big\{ \vomega_t^\intercal \left( \vtheta_t - \mSigma_t \vpi_t \right)  dt + \vomega_t^\intercal \, d\vM_t  \Big\} + o(\epsilon) \, .
\end{align*}
Rearranging the equation above and integrating, while noting that $X^{\vpi + \epsilon \vomega}_0 = X_0^\vpi = x$, yields
\[
\log \left(\frac{X^{\vpi + \epsilon \vomega}_t}{X_t^\vpi} \right) - \log \left(\frac{X^{\vpi + \epsilon \vomega}_0}{X_0^\vpi} \right) = \epsilon \left\{ \int_0^t  \vomega_u^\intercal \left( \vtheta_u - \mSigma_u \vpi_u \right)  du + \int_0^t  \vomega_u^\intercal ~d\vM_u \right\} + o(\epsilon) \,.
\]
Now we can write
\begin{align*}
& X^{\vpi + \epsilon \vomega}_t =  X_t^\vpi \exp \big( \epsilon I_t^\vomega + o(\epsilon) \big)
\\
\implies ~~ & X^{\vpi + \epsilon \vomega}_t =  X_t^\vpi \left( 1 + \epsilon I_t^\vomega + o(\epsilon) \right) \, . \numberthis \label{eqn:perturbedWealth}
\end{align*}
Prior to considering the auxiliary processes $F^{\vpi,(1)}$ and $F^{\vpi,(2)}$, we need to write the derivatives of $U\left( X_t^{\vpi + \epsilon \vomega} \right)$ in terms of the unperturbed wealth process. To do so, we use the expression in \eqref{eqn:perturbedWealth} to write
\begin{align*}
U^{(k)} \left( X_t^{\vpi + \epsilon \vomega} \right) &= U^{(k)}\Big(X_t^\vpi + \epsilon X_t^\vpi I_t^\vomega + o (\epsilon) \Big) \, .
\end{align*}
Since $U$ is sufficiently differentiable we can write this expression as a Taylor series around $X_t^\vpi$, namely
\begin{align} \label{eqn:Uderivtaives}
U^{(k)}\Big(X_t^\vpi + \epsilon X_t^\vpi I_t^\vomega + o (\epsilon) \Big) &= U^{(k)}(X_t^\vpi) + \epsilon U^{(k+1)}(X_t^\vpi) X_t^\vpi I_t^\vomega + o (\epsilon) \, .
\end{align}
Using \eqref{eqn:perturbedWealth} and \eqref{eqn:Uderivtaives}, the perturbed auxiliary processes $F^{\vpi + \epsilon \vomega, (k)}$ for $k = 1,2 $ are given by
\begin{align*}
F_t^{\vpi + \epsilon \vomega, (k)} &= U^{(k)}(X_t^{\vpi + \epsilon \vomega}) (X_t^{\vpi + \epsilon \vomega})^k
\\
&= \Big( U^{(k)}(X_t^\vpi) + \epsilon U^{(k+1)}(X_t^\vpi) X_t^\vpi I_t^\vomega + o (\epsilon) \Big) \Big( X_t^\vpi  + \epsilon X_t^\vpi I_t^\vomega + o (\epsilon)  \Big)^k
\\
&= \Big( U^{(k)}(X_t^\vpi) + \epsilon U^{(k+1)}(X_t^\vpi) X_t^\vpi I_t^\vomega + o (\epsilon) \Big) \Big( (X_t^\vpi)^k  + \epsilon k (X_t^\vpi)^k I_t^\vomega + o (\epsilon)  \Big)
\\
&= F_t^{\vpi,(k)} + \epsilon I_t^\vomega \big( k  U^{(k)}(X_t^\vpi)(X_t^\vpi)^k + U^{(k+1)}(X_t^\vpi) (X_t^\vpi)^{k+1} \big) + o(\epsilon)
\\
&= F_t^{\vpi,(k)} + \epsilon I_t^\vomega \left( k  F_t^{\vpi,(k)} + F_t^{\vpi,(k+1)} \right) + o(\epsilon) \, . \numberthis \label{eqn:perturbedF}
\end{align*}
Now we have all the components to find the dynamics of the perturbed utility process $Z^{\vpi + \epsilon \vomega}$. Starting from \eqref{eqn:utilitySDE} and using \eqref{eqn:perturbedWealth} and \eqref{eqn:perturbedF}
{ \small
\begin{align*}
dZ_t^{\vpi + \epsilon \vomega} &= \Big( F_t^{\vpi + \epsilon \vomega, (1)} \big(r_t + ({\vpi_t + \epsilon \vomega_t})^\intercal \vtheta_t \big) + \tfrac{1}{2} F_t^{\vpi + \epsilon \vomega, (2)} ({\vpi_t + \epsilon \vomega_t})^\intercal \mSigma_t ({\vpi_t + \epsilon \vomega_t}) \Big) \, dt
\\
& \qquad \qquad + F_t^{\vpi_t + \epsilon \vomega, (1)} ({\vpi_t + \epsilon \vomega_t})^\intercal \, d\vM_t
\\
&= \bigg( \left[ F_t^{\vpi,(1)} + \epsilon I_t^\vomega \left( F_t^{\vpi,(1)} + F_t^{\vpi,(2)} \right) \right] \big(r_t + \vpi_t^\intercal \vtheta_t + \epsilon \vomega_t^\intercal \vtheta_t \big)
\\
& \qquad \qquad + \tfrac{1}{2} \left[ F_t^{\vpi,(2)} + \epsilon I_t^\vomega \left( 2  F_t^{\vpi,(2)} + F_t^{\vpi,(3)} \right) \right] \Big[ \vpi_t^\intercal \mSigma_t \vpi_t + 2 \epsilon \vomega_t^\intercal \mSigma_t \vpi_t \Big] \bigg) \, dt
\\
& \qquad \qquad + \left[ F_t^{\vpi,(1)} + \epsilon I_t^\vomega \left(  F_t^{\vpi,(1)} + F_t^{\vpi,(2)} \right) \right] ({\vpi_t + \epsilon \vomega_t})^\intercal \, d\vM_t + o(\epsilon)
\\
&= dZ_t^\vpi + \epsilon ~\bigg\{ F_t^{\vpi, (1)} \vomega_t^\intercal \vtheta_t + I_t^\vomega \left( F_t^{\vpi,(1)} + F_t^{\vpi,(2)} \right) (r_t + \vpi_t^\intercal \vtheta_t )
\\
& \qquad \qquad \qquad +  F_t^{\vpi,(2)} \vomega_t^\intercal \mSigma_t \vpi_t  + I_t^\vomega \left(F_t^{\vpi,(2)} + \tfrac{1}{2} F_t^{\vpi,(3)} \right) \vpi_t^\intercal \mSigma_t \vpi_t \Big) \bigg\} \, dt
\\
& \qquad  \qquad  + \epsilon ~\bigg\{ F_t^{\vpi,(1)} \vomega_t^\intercal + I_t^\vomega \left(  F_t^{\vpi,(1)} + F_t^{\vpi,(2)} \right) \vpi_t^\intercal \bigg\} \, d\vM_t + o(\epsilon) \, .
\end{align*}
}%
Since $Z_0^\vpi = Z_0^{\vpi + \epsilon \vomega} = U(x)$, integrating both sides of the equation above from 0 to $T \wedge \tau^{\vpi + \epsilon \vomega}$ and taking expectations yields the desired result provided that the stochastic integral on the RHS of the equation has zero mean, which we prove in the following lemma:
\begin{Lemma} \label{prop:zeroMeanCondition}
For any constrained admissible controls $\vpi, \vomega \in \As_K$ we have $F^{\vpi,(1)}, F^{\vpi,(2)} \in \LL^{\infty,M}_T(\RR)$ and
\[
\EE \left[ \int_0^{T \wedge \tau^{\vpi + \epsilon \vomega}} \bigg\{ F_t^{\vpi,(1)} \vomega_t^\intercal + \left(  F_t^{\vpi,(1)} + F_t^{\vpi,(2)} \right) I_t^\vomega \vpi_t^\intercal \bigg\} \, d\vM_t \right] = 0 \, .
\]
\end{Lemma}
\begin{proof}
The first statement follows from the definition of the constrained admissible set. Namely, the fact that portfolios are stopped at a certain wealth threshold and that the risk-free rate is bounded implies that wealth is bounded on the interval $[0,T]$. Furthermore, since $F^{\vpi, (k)}$ are continuous functions of wealth they must also be bounded on this interval.

Next, we rewrite the stochastic integral as
\[
\int_0^T \ind^{\vpi + \epsilon \vomega}_t \, \bigg\{ F_t^{\vpi,(1)} \vomega_t^\intercal + \left(  F_t^{\vpi,(1)} + F_t^{\vpi,(2)} \right) I_t^\vomega \vpi_t^\intercal \bigg\} \, d\vM_t \, .
\]
The goal is to show that the stochastic integral under the expectation, which we denote $V_t$, is a local martingale. If this is the case then there exists a sequence of stopping times $T_n \uparrow \infty$ a.s. such that $V_{T_n \wedge t}$ is a martingale for each $n$. By choosing $n^* = \inf\left\{ n: T_n > T \right\}$ so that $T_{n^*} \wedge T \ = T$ and $V_{T_{n^*} \wedge t}$ is a martingale which would give
\[
0 = V_0 = V_{T_{n^*} \wedge 0} = \EE\left[ V_{T_{n^*} \wedge T } \right] = \EE[V_T] \,
\]
as required.

To show that $V$ is a local martingale we begin with the following observation: any integral with respect to $\vM$ where the integrand is predictable and in $\LL_T^2(\RR^n)$ is a continuous local martingale by Theorem 30, Ch. IV of \cite{protter2005stochastic}. Applying this to \eqref{eqn:logWealth} it follows that the wealth process $X^\vpi$ has continuous paths. Furthermore, since $U^{(k)}$ is continuous, $F^{\vpi, (k)}$ has continuous paths for all $k \in \NN$ as well. Since $F^{\vpi, (k)}$ is also $\FF$-adapted we can conclude that it is $\FF$-predictable. The indicator is also $\FF$-predictable by the continuity of the wealth paths. Additionally, since $F^{\vpi,(1)} \in \LL^{\infty,M}_T$ and $\vomega \in \LL^2_T(\RR^n)$ is predictable $\int_0^T \ind^{\vpi + \epsilon \vomega}_t \, F_t^{\vpi,(1)} \vomega_t^\intercal \, d\vM_t$ is a (continuous) local martingale.

Next, we show that $W_t \coloneqq \int_0^T \ind^{\vpi + \epsilon \vomega}_t \, \left(  F_t^{\vpi,(1)} + F_t^{\vpi,(2)} \right) I_t^\vomega \vpi_t^\intercal \, d\vM_t$ is also a (continuous) local martingale. By similar reasoning as above $I^\vomega$ is continuous as it is the sum of an ordinary integral and an integral with respect to a continuous martingale, $\vM$, with a predictable integrand, $\vomega$, that is in $\LL^2_T(\RR^n)$. Since $I^\vomega$ is continuous and adapted it is also predictable and hence the integrand in $W$ is predictable. Furthermore, the quadratic variation of $W$ satisfies
\[
\int_0^t \ind^{\vpi + \epsilon \vomega}_t \, (I_t^\vomega)^2 \, \vpi_t^\intercal \mSigma_t \vpi_t \, \, dt ~\leq~ C \bigg( ~\underset{s \in [0,t]}{\sup} (I_s^\vomega)^2 \bigg) \left(\int_0^t \| \vpi_t \|^2 \, dt\right) < \infty ~~ \text{a.s. for all } t \geq 0.
\]
The RHS of the inequality follows because $I^\vomega$ is continuous and hence bounded on compact sets and because $\EE \left[ \int_0^t \| \vpi_t \|^2 \, dt \right] < \infty$ implies that $\int_0^t \| \vpi_t \|^2 \, dt < \infty$ a.s. It follows by Theorem 30, Ch. IV of \cite{protter2005stochastic} that $W_t$ (and hence $V_t$) is a continuous local martingale and the proof is complete.
\end{proof}

This completes the proof of the proposition.
\end{proof} \\

The following result is used to simplify the expression for the G\^{a}teaux derivative, particularly, to handle the term $I_t^\vomega h_t^\vpi$.
\begin{Lemma} \label{lemma:integral1}
Let $a =(a_t)_{t \geq 0}$, $\vb =(\vb_t)_{t \geq 0}$, $\ell =(\ell_t)_{t \geq 0}$ be processes with $a, \ell \in \LL_T^{2}(\RR)$ and $\vb \in \LL_T^{2}(\RR^n)$ and $\FF$-predictable and let $\tau$ be an $\FF$-stopping time with $\tau \leq T$. Then,
{ \small
\begin{align*}
& \EE \left[ \int_0^\tau  \ell_t  \left( \int_0^t  a_u \, du + \int_0^t  \vb_u^\intercal \, d\vM_u\right) dt \right]  = \EE \left[ \int_0^\tau a_t \left( \Ms_t - \int_0^t  \ell_u \, du \right) dt \right] + \EE \left[ \int_0^\tau \vb_t^\intercal \, d\langle \Ms, \vM \rangle_t \right] \, , \numberthis
\end{align*}}%
where $\displaystyle \Ms_t = \EE_t \left[ \int_0^\tau \ell_u \, du \right] \coloneqq \EE \left[ \int_0^\tau \ell_u \, du \, \middle| \, \Ft_t \right]$ and $d\langle \Ms, \vM \rangle$ is a vectorized version of $d\langle \Ms, M^i \rangle$.
\end{Lemma}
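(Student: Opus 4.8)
The plan is to avoid any stochastic Fubini theorem and instead extract the identity from two applications of the integration-by-parts formula, exploiting that the conditional-expectation process $\Ms$ is a martingale. First I would introduce the shorthand $L_t \coloneqq \int_0^t \ell_u\,du$ and $Y_t \coloneqq \int_0^t a_u\,du + \int_0^t \vb_u^\intercal\,d\vM_u$, so that $dL_t = \ell_t\,dt$ and $dY_t = a_t\,dt + \vb_t^\intercal\,d\vM_t$, with $L_0 = Y_0 = 0$. The left-hand side of the lemma is then simply $\EE\left[\int_0^\tau Y_t\,dL_t\right]$. Since $\ell \in \LL_T^2(\RR)$ and $\tau \le T$, Cauchy--Schwarz gives $L_\tau \in \LL^2$, so $\Ms_t = \EE[\,L_\tau \mid \Ft_t\,]$ is a square-integrable (closed) martingale on $[0,T]$; optional sampling at $\tau \le T$ then yields the key identity $\Ms_\tau = L_\tau$, since $L_\tau$ is $\Ft_\tau$-measurable as the value of the continuous adapted process $L$ at a stopping time.

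Next I would write down two integration-by-parts relations. Because $L$ is continuous and of finite variation, the product rule for $L_t Y_t$ carries no covariation term, giving $L_\tau Y_\tau = \int_0^\tau L_t\,dY_t + \int_0^\tau Y_t\,dL_t$; solving for the last integral and taking expectations, while discarding the zero-mean stochastic integral $\int_0^\tau L_t\,\vb_t^\intercal\,d\vM_t$, produces
\[
\EE\left[\int_0^\tau Y_t\,dL_t\right] = \EE\left[L_\tau Y_\tau\right] - \EE\left[\int_0^\tau L_t\, a_t\,dt\right].
\]
For the second relation I would apply the product rule to $\Ms_t Y_t$. Here $Y$ is continuous and $\int_0^\cdot a_u\,du$ has finite variation, so the only covariation contribution is $d\langle\Ms, Y\rangle_t = \vb_t^\intercal\,d\langle\Ms,\vM\rangle_t$; continuity of $\vM$ guarantees this bracket is well defined and continuous even if $\Ms$ itself has jumps, and it lets me replace $\Ms_{t-}$ by $\Ms_t$ inside the $dt$-integral. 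Discarding the zero-mean terms $\int_0^\tau Y_t\,d\Ms_t$ and $\int_0^\tau \Ms_{t-}\vb_t^\intercal\,d\vM_t$ then gives
\[
\EE\left[\Ms_\tau Y_\tau\right] = \EE\left[\int_0^\tau \Ms_t\, a_t\,dt\right] + \EE\left[\int_0^\tau \vb_t^\intercal\,d\langle\Ms,\vM\rangle_t\right].
\]
Substituting $\Ms_\tau = L_\tau$ into this relation and inserting it into the first, the two $a_t$-integrals combine into $\EE\left[\int_0^\tau a_t(\Ms_t - L_t)\,dt\right]$, and recalling $L_t = \int_0^t \ell_u\,du$ reproduces exactly the claimed identity.

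The main obstacle is the repeated claim that the three stochastic integrals against $\vM$ and against $\Ms$ have vanishing expectation; a priori they are only local martingales. I would handle this exactly as in Lemma \ref{prop:zeroMeanCondition}: localize along a sequence of stopping times reducing each integral to a genuine martingale, verify a.s.-finiteness of the quadratic variations using that $Y$, $L$ are continuous (hence a.s. bounded on $[0,T]$), that $\sup_{[0,T]}|\Ms_t|$ and $[\Ms]_T$ are a.s. finite for the $\LL^2$-martingale $\Ms$, and that Assumption \ref{asmp:posDef} bounds $\vb_t^\intercal\mSigma_t\vb_t \le C\|\vb_t\|^2$ with $\vb \in \LL_T^2(\RR^n)$, and then pass to the limit by dominated convergence. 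A secondary technical point is justifying the product-rule steps when $\Ms$ need not be continuous, but since $\vM$ is continuous the covariation reduces to the predictable bracket $\langle\Ms,\vM\rangle$ and the jump set of $\Ms$ is Lebesgue-null in time, so neither affects the $dt$-integrals.
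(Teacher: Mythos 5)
Your proof is correct, but it takes a genuinely different route from the paper. The paper treats the two pieces of the left-hand side separately: for the $\int_0^t a_u\,du$ term it establishes finiteness (Cauchy--Schwarz, Tonelli, Jensen) and then swaps the order of integration via Fubini together with the tower property; for the $\int_0^t \vb_u^\intercal\,d\vM_u$ term it invokes a \emph{stochastic} Fubini interchange to rewrite it as $\EE\bigl[\int_0^\tau(\Ms_\tau-\int_0^t\ell_u\,du)\,\vb_t^\intercal\,d\vM_t\bigr]$, kills the second piece by the localization argument of Lemma \ref{prop:zeroMeanCondition}, and produces the bracket term from the remaining piece via It\^o's isometry (polarization). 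You avoid Fubini entirely: the deterministic interchange is replaced by the product rule for $L_tY_t$ (no covariation, as $L$ is continuous and of finite variation), the stochastic interchange by the product rule for $\Ms_tY_t$, and the bracket term arises as the covariation $[\Ms,Y]_\tau=\int_0^\tau\vb_t^\intercal\,d\langle\Ms,\vM\rangle_t$ rather than from polarization; the two identities are then glued by optional sampling, $\Ms_\tau=L_\tau$, which plays the role of the paper's observation that $\int_t^\tau\ell_u\,du=\Ms_\tau-\int_0^t\ell_u\,du$. What your route buys is self-containedness: the stochastic Fubini step is the deepest ingredient of the paper's proof and is invoked there with only a finiteness estimate as justification, whereas everything in your argument reduces to the semimartingale product rule, optional sampling, and the (correctly justified) identification $[\Ms,\vM]=\langle\Ms,\vM\rangle$ from continuity of $\vM$. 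The price is that you must show \emph{three} stochastic integrals have vanishing expectation instead of one, and your final limiting step needs more than the phrase ``dominated convergence'': you should exhibit the dominating variable, e.g.\ note that $\EE[\sup_{t\le T}Y_t^2]<\infty$ (Doob plus It\^o isometry plus Assumption \ref{asmp:posDef}) and $\EE[\sup_{t\le T}\Ms_t^2]\le 4\,\EE[\Ms_T^2]<\infty$ (Doob's $L^2$ inequality), so that Burkholder--Davis--Gundy and Cauchy--Schwarz give a uniform $L^1$ bound on the stopped integrals. This is the only soft spot, and it is no worse than the corresponding step in the paper's own Lemma \ref{prop:zeroMeanCondition}; you should also record, for completeness, the integrability of $L_\tau Y_\tau$ and $\int_0^\tau|\Ms_t a_t|\,dt$ (again Cauchy--Schwarz) so that splitting the expectations across the two product-rule identities is legitimate.
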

\begin{proof}
We treat the two integrals on the LHS of the equation above separately. For the first integral we begin by demonstrating that the integral is finite. Let $\lambda$ denote the Lebesgue measure on $(\RR, \mathcal{B}(\RR))$. Then we have
{\footnotesize
\begin{align*}
\EE\left[ \int_0^\tau \ell_t \left( \int_0^t a_u \, du \right) dt \right] &= \int_{\Omega \times [0,\tau]} \left\{ \ell_t \left( \int_0^t a_u \, du \right) \right\} (\PP \times \lambda)(d\omega, dt)
\\
& \leq \left[\int_{\Omega \times [0,T]} (\ell_t)^2 (\PP \times \lambda)(d\omega, dt)\right] \left[\int_{\Omega \times [0,T]} \left( \int_0^t a_u \, du \right)^2 (\PP \times \lambda)(d\omega, dt)\right]
\\
&\hspace{8.9cm} \text{Cauchy-Schwarz inequality}
\\
&= \EE \left[\int_0^T (\ell_t)^2 \,dt \right] \, \EE \left[\int_0^T \left( \int_0^t a_u \, du \right)^2 dt \right]
\\
&= \EE \left[\int_0^T (\ell_t)^2 \,dt \right] \, \left(\int_0^T \EE \left[ \left( \int_0^t a_u \, du \right)^2 \right] dt\right) \qquad \qquad \text{Tonelli's theorem}
\\
&\leq \EE \left[\int_0^T (\ell_t)^2 \,dt \right] \, \left(\int_0^T \EE \left[ \int_0^t (a_u)^2 \, du \right] dt\right) \hspace{1.75cm} \text{Jensen's inequality}
\\
&< \infty  \hspace{7.9cm} \text{since $a, \ell \in \LL^2_T(\RR)$.}
\end{align*} }
This allows us to change the order of integration by applying Fubini's to write
{\footnotesize \begin{align*}
\EE \Biggl[ \int_0^\tau  \int_0^t \ell_t a_u \,du \,dt \Biggr] &= \EE \Biggl[ \int_0^\tau  \int_u^\tau  \ell_t  a_u \,dt \,du \Biggr] && \text{change order of integration}
\\
&= \EE \Biggl[ \int_0^\tau  \left(\int_u^\tau \ell_t \,dt \right) a_u \,du \Biggr]
\\
&= \EE \Biggl[ \int_0^\tau  \EE_u\left[ \int_u^\tau  \ell_t \,dt \right] a_u \,du \Biggr] && \text{tower property and Fubini's theorem}
\\
&= \EE \Biggl[ \int_0^\tau  \left( \Ms_u - \int_0^u  \ell_t \,dt \right) a_u \,du \Biggr] \, .
\end{align*} }

For the second integral we use a similar argument to show the finiteness of the integral:
{\footnotesize
\begin{align*}
\EE\left[ \int_0^\tau \ell_t \left( \int_0^t \vb_u^\intercal \, d\vM_u \right) dt \right] &= \int_{\Omega \times [0,\tau]} \left\{ \ell_t \left( \int_0^t \vb_u^\intercal \, d\vM_u \right) \right\} (\PP \times \lambda)(d\omega, dt)
\\
& \leq \left[\int_{\Omega \times [0,T]} (\ell_t)^2 (\PP \times \lambda)(d\omega, dt)\right] \left[\int_{\Omega \times [0,T]} \left( \int_0^t \vb_u^\intercal \, d\vM_u \right)^2 (\PP \times \lambda)(d\omega, dt)\right]
\\
&\hspace{8.5cm} \text{Cauchy-Schwarz inequality}
\\
&= \EE \left[\int_0^T (\ell_t)^2 dt \right] \EE \left[\int_0^T \left( \int_0^t \vb_u^\intercal \, d\vM_u \right)^2 dt \right]
\\
&= \EE \left[\int_0^T (\ell_t)^2 dt \right]  \int_0^T \EE \left[ \left( \int_0^t \vb_u^\intercal \, d\vM_u \right)^2 \right] dt \qquad \qquad \text{Tonelli's theorem}
\\
&= \EE \left[\int_0^T (\ell_t)^2 dt \right]  \int_0^T \EE \left[ \int_0^t \vb_u^\intercal \mSigma_u \vb_u \, du \right] \, dt  \qquad \qquad ~~ \text{It\^{o}'s isometry}
\\
&\leq \EE \left[\int_0^T (\ell_t)^2 dt \right] \int_0^T \EE \left[ \int_0^t C \| \vb_u \|^2 \, du \right] \, dt
\\
&< \infty  \hspace{7.75cm} \text{since $\ell \in \LL^2_T(\RR)$, $\vb \in \LL^2_T(\RR^n)$.}
\end{align*} }%
This allows us to once again change the order of integration to write
{ \small
\begin{align*}
\EE \Bigg[ \int_0^\tau \ell_t \left( \int_0^t \vb^\intercal_u \,d\vM_u\right) dt \Bigg] &= \EE \Bigg[ \int_0^\tau  \left(\int_t^\tau \ell_u \,du \right) \vb_t^\intercal \, d\vM_t \Bigg]
\\
&=  \EE \Bigg[ \int_0^\tau   \left( \int_0^\tau \ell_u \,du - \int_0^t \ell_u \,du  \right) \vb_t^\intercal \,d\vM_t \Bigg]
\\
&=  \EE \Bigg[ \int_0^\tau   \left( \Ms_\tau - \int_0^t \ell_u \,du  \right) \vb_t^\intercal \,d\vM_t \Bigg] \, .
\end{align*}
}%
Now we have two terms to consider:
\[
Z_1 = \EE \Bigg[ \int_0^\tau   \Ms_\tau \vb_t^\intercal \,d\vM_t \Bigg] \qquad  \text{and} \qquad Z_2 = \EE \Bigg[ \int_0^\tau  \left( \int_0^t \ell_u \,du  \right) \vb_t^\intercal \,d\vM_t \Bigg] \, .
\]
Denote $L_t \coloneqq \int_0^t \ell_u \,du$ and note that it is continuous in $t$. The integrand  of the stochastic integral appearing in $Z_2$ is predictable since it is the product of a predictable process and a continuous adapted process. The quadratic variation of the stochastic integral is
{\small
\begin{align*}
\int_0^T \ind_{\{\tau \leq T\}}\, L_t^2 \, \vb_t^\intercal \mSigma_t \vb_t \,dt ~&\leq~ C \int_0^T  L_t^2 \, \| \vb_t \|^2 \,dt
\\
&\leq\, C \left( \underset{t \in [0,T]}{\sup} L_t ^2\right) \, \left(\int_0^T \| \vb_t \|^2 \,dt \right)
\\
&< \infty ~~ \text{a.s.} && \text{since $\vb \in \LL^2_T(\RR^n)$ and $L$ is continuous.}
\end{align*}}%
Following the same reasoning used in the proof of Proposition \ref{prop:zeroMeanCondition} we have $Z_2 = 0$. For $Z_1$ we have
{\small \begin{align*}
\EE \Bigg[ \int_0^\tau   \Ms_\tau \vb_t^\intercal \,d\vM_t \Bigg] &= \EE \Bigg[ \Ms_\tau \int_0^\tau    \vb_t^\intercal \,d\vM_t \Bigg]
\\
&= \EE \Bigg[ \left(\int_0^\tau d\Ms_t\right) \left(\int_0^\tau \vb_t^\intercal \,d\vM_t\right) \Bigg]
\\
&= \EE \Bigg[ \int_0^\tau \vb_t^\intercal \,d \langle \Ms, \vM \rangle_t  \Bigg]
\end{align*}}%
The last step follows by It\^{o}'s isometry since both $\vM$ and $\Ms$ are square integrable martingales and $\vb \in \LL_T^2(\RR^n)$.
\end{proof} \\

We are now ready to compute the G\^{a}teaux derivative for our performance criteria.

\begin{Proposition} \label{prop:gateaux}
The functional $H_K: \As_K \rightarrow \RR$ is G\^{a}teaux differentiable for all  $\vpi, \vomega \in \As_K$ with G\^{a}teaux derivative $H_K'(\vpi)$ given by
{\small
\begin{align} \label{eqn:gateaux2}
\left \langle \vomega, H_K'(\vpi) \right\rangle &= \EE \Bigg[ \int_0^{ T \wedge \tau^\vpi } \vomega_t^\intercal \left\{ \left( \vg^\vpi_t + \left( \Ms_t^\vpi - \int_0^t  h^\vpi_u \, du \right) \left( \vtheta_t - \mSigma_t \vpi_t \right) \right) dt + d \left\langle \Ms^\vpi, \vM \right\rangle_t \right\} \Biggr] \, ,
\\
\text{where } \Ms_t^\vpi & \coloneqq \EE_t \left[ \int_0^{T \wedge \tau^\vpi} h_t^\vpi \, dt \right] \text{ is an $\FF$-martingale with } \EE[(\Ms^\vpi_t)^2] < \infty \text{ for all } t. \nonumber
\end{align}}
\end{Proposition}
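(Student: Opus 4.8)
The plan is to evaluate the G\^ateaux derivative directly from its definition,
\[
\langle \vomega, H_K'(\vpi)\rangle \;=\; \lim_{\epsilon\to0}\frac{H_K(\vpi+\epsilon\vomega)-H_K(\vpi)}{\epsilon},
\]
feeding in the expansion \eqref{eqn:gateaux1} of Lemma \ref{lemma:perturb} together with $H_K(\vpi)=\EE\big[Z^\vpi_{T\wedge\tau^\vpi}\big]$. After cancelling a factor of $\epsilon$, the quotient splits into a \emph{boundary} term $\tfrac{1}{\epsilon}\big(\EE[Z^\vpi_{T\wedge\tau^{\vpi+\epsilon\vomega}}]-\EE[Z^\vpi_{T\wedge\tau^\vpi}]\big)$, arising solely from the dependence of the stopping time on $\epsilon$, and a \emph{bulk} term $R(\epsilon)\coloneqq\EE\big[\int_0^{T\wedge\tau^{\vpi+\epsilon\vomega}}(\vomega_t^\intercal\vg^\vpi_t+I^\vomega_t h^\vpi_t)\,dt\big]$ whose integrand is free of $\epsilon$. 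The first step I would take is to record that $T\wedge\tau^{\vpi+\epsilon\vomega}\to T\wedge\tau^\vpi$ almost surely as $\epsilon\to0$: the representation $X^{\vpi+\epsilon\vomega}_t=X^\vpi_t(1+\epsilon I^\vomega_t+o(\epsilon))$ from \eqref{eqn:perturbedWealth} forces the perturbed wealth paths to converge uniformly on $[0,T]$ to the continuous limit $X^\vpi$, and non-degeneracy (Assumption \ref{asmp:posDef}) guarantees that $X^\vpi$ genuinely crosses the threshold, so the associated first-passage times converge.

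The bulk term is then the routine part. The integrand $\vomega_t^\intercal\vg^\vpi_t+I^\vomega_t h^\vpi_t$ is integrable on $\Omega\times[0,T]$ --- the processes $F^{\vpi,(k)}$ are bounded by Lemma \ref{prop:zeroMeanCondition} so $\vg^\vpi$ and $h^\vpi$ are bounded, $\vtheta$ and $\mSigma$ are bounded by Assumptions \ref{asmp:growthMtg}--\ref{asmp:posDef}, $I^\vomega$ has continuous (hence locally bounded) paths, and $\vomega\in\LL^2_T(\RR^n)$ --- while only the upper limit of integration depends on $\epsilon$. Hence $R(\epsilon)-R(0)$ is an expectation over the shrinking random interval between $T\wedge\tau^\vpi$ and $T\wedge\tau^{\vpi+\epsilon\vomega}$, and dominated convergence yields $R(\epsilon)\to R(0)=\EE\big[\int_0^{T\wedge\tau^\vpi}(\vomega_t^\intercal\vg^\vpi_t+I^\vomega_t h^\vpi_t)\,dt\big]$.

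The boundary term is the crux, and I expect it to be the main obstacle. Writing it as $\tfrac1\epsilon\EE\big[\int_{T\wedge\tau^\vpi}^{T\wedge\tau^{\vpi+\epsilon\vomega}}dZ^\vpi_t\big]$, the martingale part of $Z^\vpi$ contributes nothing in expectation (by optional stopping and the square-integrability established in Lemma \ref{prop:zeroMeanCondition}), so everything reduces to the drift of $Z^\vpi$ accumulated over the interval separating the two stopping times. The difficulty is quantitative: the level shift induced by the perturbation is itself of order $\epsilon$, so a naive bound only shows that this interval has expected length of order $\epsilon$ and hence a contribution that is a priori of order $\epsilon$. To show that the first-order contribution actually vanishes one must exploit the diffusive nature of the crossing --- under Assumption \ref{asmp:posDef} the wealth process oscillates across the threshold rather than crossing it transversally --- together with the continuity of $Z^\vpi$ and the boundedness of its drift, so that the net drift picked up over the perturbed versus unperturbed stopping region is $o(\epsilon)$. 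Making this estimate rigorous, rather than any other part of the argument, is where the real work lies.

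Granting that the boundary term vanishes, we are left with $\langle\vomega,H_K'(\vpi)\rangle=R(0)$, and it only remains to recast this in the stated form. I would apply Lemma \ref{lemma:integral1} to the summand $\EE\big[\int_0^{T\wedge\tau^\vpi}I^\vomega_t h^\vpi_t\,dt\big]$ with the choices $\ell_t=h^\vpi_t$, $a_t=\vomega_t^\intercal(\vtheta_t-\mSigma_t\vpi_t)$, $\vb_t=\vomega_t$ and $\tau=T\wedge\tau^\vpi$, recalling from \eqref{eqn:integral} that $I^\vomega_t$ has exactly the form $\int_0^t a_u\,du+\int_0^t\vb_u^\intercal\,d\vM_u$; the hypotheses hold because $h^\vpi$ is bounded and the boundedness of $\vtheta$, $\mSigma$ gives $a,\ell\in\LL^2_T(\RR)$, while $\vb=\vomega\in\LL^2_T(\RR^n)$, all predictable. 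This converts the term into the drift $\big(\Ms^\vpi_t-\int_0^t h^\vpi_u\,du\big)(\vtheta_t-\mSigma_t\vpi_t)$ paired against $\vomega_t$ plus the covariation $\vomega_t^\intercal\,d\langle\Ms^\vpi,\vM\rangle_t$, and adding the $\vomega_t^\intercal\vg^\vpi_t$ term gives \eqref{eqn:gateaux2}. Finally I would verify the asserted properties of $\Ms^\vpi_t=\EE_t\big[\int_0^{T\wedge\tau^\vpi}h^\vpi_u\,du\big]$: as the conditional expectation of a single fixed random variable it is a closed (Doob) martingale, and since $h^\vpi$ is bounded and $T$ finite its terminal value is bounded, so conditional Jensen gives $\EE[(\Ms^\vpi_t)^2]\le\EE\big[(\int_0^{T\wedge\tau^\vpi}h^\vpi_u\,du)^2\big]<\infty$; this square-integrability is also precisely what legitimizes the covariation $\langle\Ms^\vpi,\vM\rangle$ invoked in Lemma \ref{lemma:integral1}.
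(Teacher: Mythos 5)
Your route is structurally identical to the paper's: expand $H_K(\vpi+\epsilon\vomega)$ via Lemma \ref{lemma:perturb}, pass to the limit in the difference quotient to obtain $\EE\big[\int_0^{T\wedge\tau^\vpi}(\vomega_t^\intercal\vg_t^\vpi+I_t^\vomega h_t^\vpi)\,dt\big]$, convert the $I^\vomega h^\vpi$ term with Lemma \ref{lemma:integral1} using exactly the same choices ($\ell=h^\vpi$, $a_t=\vomega_t^\intercal(\vtheta_t-\mSigma_t\vpi_t)$, $\vb=\vomega$, $\tau=T\wedge\tau^\vpi$), and close with the Doob-martingale observation. The difficulty is that your write-up is explicitly incomplete at the one step you yourself identify as "where the real work lies": you never prove that the boundary term $\tfrac{1}{\epsilon}\big(\EE[Z^\vpi_{T\wedge\tau^{\vpi+\epsilon\vomega}}]-\EE[Z^\vpi_{T\wedge\tau^\vpi}]\big)$ vanishes; you offer a heuristic about diffusive crossings and then proceed under the hypothesis "granting that the boundary term vanishes". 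A conditional argument does not establish the proposition. For comparison, the paper disposes of this point in one line: it observes that $\lim_{\epsilon\to0}\EE[Z^\vpi_{T\wedge\tau^{\vpi+\epsilon\vomega}}]=\EE[Z^\vpi_{T\wedge\tau^\vpi}]=\EE[Z^\vpi_T]=H_K(\vpi)$ (since admissible portfolios are stopped at $\tau^\vpi$) and then substitutes $H_K(\vpi)$ for $\EE[Z^\vpi_{T\wedge\tau^{\vpi+\epsilon\vomega}}]$ inside the quotient. You are right that this substitution really requires the rate $\EE[Z^\vpi_{T\wedge\tau^{\vpi+\epsilon\vomega}}]-H_K(\vpi)=o(\epsilon)$ rather than mere convergence, so your diagnosis of the subtlety is sharper than the paper's treatment; but since you supply no estimate either, your attempt establishes strictly less than the paper's argument claims, and as a standalone proof it has a genuine gap precisely there.

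A second, smaller problem: several of your supporting claims are false as stated. $\vg^\vpi$ and $h^\vpi$ are not bounded: boundedness of the $F^{\vpi,(k)}$ (Lemma \ref{prop:zeroMeanCondition}) does nothing for the factors $\vpi_t^\intercal\vtheta_t$ and $\vpi_t^\intercal\mSigma_t\vpi_t$ appearing in $h_t^\vpi$, because $\vpi\in\LL_T^2(\RR^n)$ need not be bounded; for the same reason $a_t=\vomega_t^\intercal(\vtheta_t-\mSigma_t\vpi_t)$ contains a product of two $\LL^2$ processes, so $a\in\LL_T^2(\RR)$ does not follow from boundedness of $\vtheta$ and $\mSigma$ alone. (The paper instead asserts $h^\vpi\in\LL_T^2(\RR)$ and the integrability of the integrands of $I^\vomega$, also without proof, so both treatments gloss integrability; but your specific justification is incorrect rather than merely unproven.) Your final verification that $\EE[(\Ms_t^\vpi)^2]<\infty$ likewise leans on the false boundedness of $h^\vpi$; the correct route, and the paper's, is $h^\vpi\in\LL_T^2(\RR)$ together with Cauchy--Schwarz/Jensen applied to $\big(\int_0^{T\wedge\tau^\vpi}h_u^\vpi\,du\big)^2\le T\int_0^T(h_u^\vpi)^2\,du$.
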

\begin{proof}
First, notice that
\[
\underset{\epsilon \rightarrow 0}{\lim} ~ \EE \left[ Z^\vpi_{T \wedge \tau^{\vpi + \epsilon \vomega} }\right] = \EE \left[ Z^\vpi_{T \wedge \tau^\vpi} \right] = \EE \left[ Z^\vpi_{T} \right] \, ,
\]
with the last equality following from the fact that portfolios are stopped at $\tau^\vpi$. This allows us to write
\[
\underset{\epsilon \rightarrow 0}{\lim} ~ \frac{H_K(\vpi + \epsilon \vomega) - \EE \left[ Z_{T \wedge \tau^{\vpi + \epsilon \vomega} }\right]}{\epsilon} = \underset{\epsilon \rightarrow 0}{\lim} ~ \frac{H_K(\vpi + \epsilon \vomega) - H_K(\vpi)}{\epsilon} = \left\langle \vomega, H_K'(\vpi) \right\rangle \, .
\]
So by rearranging \eqref{eqn:gateaux1} and taking the limit as $\epsilon$ tends to 0 we find that
\[
\left\langle \vomega, H_K'(\vpi) \right\rangle = \EE \left[ \int_0^{ T \wedge \tau^\vpi } \Big( \vomega_t^\intercal \vg^\vpi_t + I_t^\vomega h^\vpi_t \Big) \, dt \right] \,.
\]
Next, we use the fact that $h$ as well as the integrands in $I_t^\vomega$ of \eqref{eqn:integral} are in $\LL^2_T(\RR)$ to apply Lemma \ref{lemma:integral1} and simplify to obtain the expression in \eqref{eqn:gateaux2}. Finally, notice that since $\Ms^\vpi$ is a Doob martingale and $h \in \LL^2_T(\RR)$ it is in fact a true martingale with finite second moment.
\end{proof} \\

The next lemma gives an explicit representation of $\Ms_t^\vpi - \int_0^t h_u^\vpi \, du$ and $d\langle \Ms^\vpi, \vM \rangle_t$ which will allow us to simplify the G\^{a}teaux derivative and eventually solve the optimal control problem.

\begin{Lemma} \label{lemma:M_process}
Define the processes $q^\vpi = (q^\vpi_t)_{t \geq 0}$ and $Y^\vpi = (Y^\vpi_t)_{t \geq 0}$ as
\begin{align}
q_t^\vpi & \coloneqq F^{\vpi, (1)}_t + F^{\vpi, (2)}_t \, , \text{ and }
\\
Y_t^\vpi & \coloneqq \EE_t \left[ \exp \left( \int_t^{ T \wedge \tau^\vpi } \left[ \tfrac{h_u^\vpi}{F^{\vpi, (1)}_u} - \tfrac{1}{2} \left(\tfrac{q_u^\vpi}{F^{\vpi, (1)}_u}\right)^2 \vpi_u^\intercal \mSigma_u \vpi_u \right] \, du + \int_t^{ T \wedge \tau^\vpi } \tfrac{q_u^\vpi}{F^{\vpi, (1)}_u} \vpi_u^\intercal \, d\vM_u \right) \right] > 0 \, .
\end{align}
Further, write $Y^\vpi$ as the solution to the SDE
\begin{equation}
dY_t^\vpi = Y_t^\vpi \mu_t^\vpi \, dt + Y_t^\vpi (\vsigma_t^\vpi)^\intercal \, d\vM_t \, .
\end{equation}
Then we have the following:
{
\begin{align}
(i) \qquad  &\Ms_t^\vpi - \int_0^t h_u^\vpi \, du = F^{\vpi, (1)}_t \left( Y^\vpi_t - 1 \right)
\\
(ii) \qquad & d \langle \Ms^\vpi, \vM \rangle_t = \mSigma_t \left[ (Y^\vpi_t - 1) q^\vpi_t  \vpi_t + F^{\vpi, (1)}_t Y_t^\vpi \vsigma_t^\vpi \right] dt \hspace{3cm} \label{eqn:quadCov}
\end{align}}
\end{Lemma}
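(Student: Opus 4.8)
The engine driving both identities is the observation that $F_t^{\vpi,(1)} = U^{(1)}(X_t^\vpi)\,X_t^\vpi$ is itself an It\^o process whose drift is \emph{exactly} $h_t^\vpi$ and whose diffusion coefficient is \emph{exactly} $q_t^\vpi\,\vpi_t$. The first thing I would do is apply It\^o's lemma to $G(X_t^\vpi)$ with $G(x) = U^{(1)}(x)\,x$, using the wealth dynamics \eqref{eqn:portSDE}. The point is that $G'(x)\,x = U^{(2)}(x)x^2 + U^{(1)}(x)x = F^{\vpi,(2)} + F^{\vpi,(1)} = q^\vpi$ and $G''(x)\,x^2 = U^{(3)}(x)x^3 + 2U^{(2)}(x)x^2 = F^{\vpi,(3)} + 2F^{\vpi,(2)}$, so that after collecting the drift of $F^{\vpi,(1)}$ it collapses to $q_t^\vpi(r_t + \vpi_t^\intercal\vtheta_t) + (F_t^{\vpi,(2)} + \tfrac12 F_t^{\vpi,(3)})\vpi_t^\intercal\mSigma_t\vpi_t = h_t^\vpi$. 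In other words,
\begin{equation*}
dF_t^{\vpi,(1)} = h_t^\vpi\,dt + q_t^\vpi\,\vpi_t^\intercal\,d\vM_t\,.
\end{equation*}

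For part $(i)$, since $F_t^{\vpi,(1)} = U^{(1)}(X_t^\vpi)X_t^\vpi > 0$, the SDE above is linear, and dividing by $F^{\vpi,(1)}$ and solving gives $F_s^{\vpi,(1)} = F_t^{\vpi,(1)}\,\mathcal{E}_{t,s}$ for $s \geq t$, where $\mathcal{E}_{t,s}$ is a stochastic exponential whose logarithm matches \emph{exactly} the exponent defining $Y_t^\vpi$ (with integrands $h_u^\vpi/F_u^{\vpi,(1)}$ and $(q_u^\vpi/F_u^{\vpi,(1)})\vpi_u$, so that the It\^o correction reads $(q_u^\vpi/F_u^{\vpi,(1)})^2\vpi_u^\intercal\mSigma_u\vpi_u$). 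Taking $s = T\wedge\tau^\vpi$ and conditioning yields $\EE_t[F_{T\wedge\tau^\vpi}^{\vpi,(1)}] = F_t^{\vpi,(1)}\,\EE_t[\mathcal{E}_{t,T\wedge\tau^\vpi}] = F_t^{\vpi,(1)}Y_t^\vpi$. On the other hand, integrating the $F^{\vpi,(1)}$-SDE from $t$ to $T\wedge\tau^\vpi$ and taking $\EE_t$ — where the conditional expectation of $\int_t^{T\wedge\tau^\vpi} q_u^\vpi\vpi_u^\intercal\,d\vM_u$ vanishes by the localization argument of Lemma \ref{prop:zeroMeanCondition}, since $q^\vpi = F^{\vpi,(1)} + F^{\vpi,(2)}$ is bounded on the stopped horizon and $\vpi\in\LL^2_T(\RR^n)$ — gives $\EE_t[\int_t^{T\wedge\tau^\vpi} h_u^\vpi\,du] = \EE_t[F_{T\wedge\tau^\vpi}^{\vpi,(1)}] - F_t^{\vpi,(1)} = F_t^{\vpi,(1)}(Y_t^\vpi - 1)$. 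Since $\Ms_t^\vpi - \int_0^t h_u^\vpi\,du = \EE_t[\int_t^{T\wedge\tau^\vpi} h_u^\vpi\,du]$ on the stopped interval, this is precisely $(i)$.

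For part $(ii)$ I would feed the representation from $(i)$, namely $\Ms_t^\vpi = \int_0^t h_u^\vpi\,du + F_t^{\vpi,(1)}Y_t^\vpi - F_t^{\vpi,(1)}$, into the covariation $d\langle\Ms^\vpi,\vM\rangle$. The pure-drift term $\int_0^t h_u^\vpi\,du$ has finite variation and contributes nothing, so only the martingale parts of $F^{\vpi,(1)}Y^\vpi$ and of $-F^{\vpi,(1)}$ survive. Applying It\^o's product rule to $F^{\vpi,(1)}Y^\vpi$, using the SDE for $F^{\vpi,(1)}$ and the posited SDE $dY_t^\vpi = Y_t^\vpi\mu_t^\vpi\,dt + Y_t^\vpi(\vsigma_t^\vpi)^\intercal\,d\vM_t$, the diffusion coefficient of $F^{\vpi,(1)}Y^\vpi$ is $Y_t^\vpi q_t^\vpi\vpi_t + F_t^{\vpi,(1)}Y_t^\vpi\vsigma_t^\vpi$ (the cross-variation $d\langle F^{\vpi,(1)}, Y^\vpi\rangle$ being finite variation), while that of $F^{\vpi,(1)}$ alone is $q_t^\vpi\vpi_t$. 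Hence the martingale part of $\Ms^\vpi$ has diffusion coefficient $\vc_t \coloneqq (Y_t^\vpi - 1)q_t^\vpi\vpi_t + F_t^{\vpi,(1)}Y_t^\vpi\vsigma_t^\vpi$, and invoking $d\langle\int\vc^\intercal d\vM,\vM\rangle_t = \mSigma_t\vc_t\,dt$ (which follows from $d\langle M^j, M^i\rangle_t = \mSigma_t^{ij}\,dt$ and the symmetry of $\mSigma$) delivers $(ii)$.

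The computational heart — and the step most prone to sign and bookkeeping errors — is the It\^o expansion of $F^{\vpi,(1)}$ and the verification that its drift reproduces $h^\vpi$ term-for-term; once that is secured, both parts reduce to routine manipulation. The two genuinely delicate points are (a) the vanishing of the conditional expectation of $\int q^\vpi\vpi^\intercal\,d\vM$, which I would justify exactly as in Lemma \ref{prop:zeroMeanCondition} through a localizing sequence together with the boundedness of $q^\vpi$ and $\vpi\in\LL^2_T(\RR^n)$; and (b) the legitimacy of representing $Y^\vpi$ as an $\vM$-driven It\^o process, which I take as granted in the statement (it follows from a martingale representation for the martingale $F^{\vpi,(1)}Y^\vpi = \EE_t[F_{T\wedge\tau^\vpi}^{\vpi,(1)}]$ combined with the strict positivity of $Y^\vpi$).
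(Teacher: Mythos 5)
Your proposal is correct and follows essentially the same route as the paper: derive $dF_t^{\vpi,(1)} = h_t^\vpi\,dt + q_t^\vpi\vpi_t^\intercal\,d\vM_t$, solve it as a stochastic exponential to identify $\EE_t\bigl[F_{T\wedge\tau^\vpi}^{\vpi,(1)}\bigr] = F_t^{\vpi,(1)}Y_t^\vpi$, kill the conditional expectation of the stochastic integral to obtain $(i)$, and then apply the product rule to $F^{\vpi,(1)}(Y^\vpi-1)$ to read off the martingale part of $\Ms^\vpi$ for $(ii)$. The only cosmetic difference is that you extract the diffusion coefficient of $\Ms^\vpi$ by discarding finite-variation contributions to the covariation, whereas the paper reaches the same conclusion by noting the drift of $\Ms^\vpi$ must vanish because it is a martingale.
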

\begin{proof}
To demonstrate the first statement, we apply It\^{o}'s lemma and product rule to obtain
{\small
\begin{align*}
dF^{\vpi, (1)}_t &= \bigg[ \left( F^{\vpi, (1)}_t + F^{\vpi, (2)}_t \right) (r_t + \vpi_t^\intercal \vtheta_t) + \left( F^{\vpi, (2)}_t + \tfrac{1}{2}  F^{\vpi, (3)}_t \right) \vpi_t^\intercal \mSigma_t \vpi_t \bigg] \, dt + \left( F^{\vpi, (1)}_t + F^{\vpi, (2)}_t \right) \vpi_t^\intercal \, d\vM_t
\\
&= h_t^\vpi \, dt + q_t^\vpi \vpi_t^\intercal \, d\vM_t \, .
\end{align*}}%
Next, we write
\[
\frac{dF^{\vpi, (1)}_t}{F^{\vpi, (1)}_t} = \frac{1}{F^{\vpi, (1)}_t} \Big[ h_t^\vpi \, dt + q_t^\vpi \vpi_t^\intercal \, d\vM_t \Big] \, ,
\]
and therefore
\begin{align*}
F^{\vpi, (1)}_{ T \wedge \tau^\vpi } = F^{\vpi, (1)}_t \exp \left( \int_t^{ T \wedge \tau^\vpi } \left[ \tfrac{h_u^\vpi}{F^{\vpi, (1)}_u} - \tfrac{1}{2} \left(\tfrac{q_u^\vpi}{F^{\vpi, (1)}_u}\right)^2 \vpi_u^\intercal \mSigma_u \vpi_u \right] du + \int_t^{ T \wedge \tau^\vpi } \tfrac{q_u^\vpi}{F^{\vpi, (1)}_u} \vpi_u^\intercal ~ d\vM_u \right) \, .
\end{align*}
Then, noting that $\EE_t \left[  \int_t^{ T \wedge \tau^\vpi }  q_u^\vpi \vpi_u^\intercal d\vM_u \right] = 0 $ since $q^\vpi$ is bounded and $\vpi \in \LL^2_T(\RR^n)$ and $\FF$-predictable, we have
{\small
\begin{align*}
\Ms_t^\vpi - \int_0^t h_u^\vpi \, du &= \EE_t \left[  \int_t^{ T \wedge \tau^\vpi }  h_u^\vpi \, du \right]
\\
&= \EE_t \left[  \int_t^{ T \wedge \tau^\vpi }  h_u^\vpi \, du + \int_t^{ T \wedge \tau^\vpi }  q_u^\vpi \vpi_u^\intercal \, d\vM_u \right]
\\
& = \EE_t \left[ \int_t^{ T \wedge \tau^\vpi }  dF^{\vpi, (1)}_u \right]
\\
& = \EE_t \left[ F^{\vpi, (1)}_{ T \wedge \tau^\vpi } - F^{\vpi, (1)}_t  \right]
\\
& = F^{\vpi, (1)}_t \left( Y^\vpi_t - 1 \right) \, ,
\end{align*}}%
which completes the proof of the first statement.

Next,  we are interested in the quadratic covariation process $\langle \Ms^\vpi, \vM \rangle_t$. For this we first write
\begin{align*}
\Ms_t^\vpi
&= \int_0^t h_u^\vpi \, du + F^{\vpi, (1)}_t (Y^\vpi_t  - 1)
\\
\implies \quad d\Ms_t^\vpi &= h_t^\vpi \, dt + \left[d \left( F^{\vpi, (1)}_t (Y^\vpi_t  - 1) \right)\right]
\\
&=  h_t^\vpi \, dt + \left[ (Y^\vpi_t  - 1) dF^{\vpi, (1)}_t + F^{\vpi, (1)}_t dY^\vpi_t + d \langle Y^\vpi, F^{\vpi, (1)} \rangle_t \right]
\\
&= \Big((Y_t^\vpi - 1) q^\vpi_t \vpi_t^\intercal + F^{\vpi, (1)}_t Y^\vpi_t (\vsigma_t^\vpi)^\intercal \Big) \, d\vM_t \numberthis \label{eqn:M_quadVar} \, ,
\end{align*}
where the drift term is zero since $\Ms^\vpi$ is a martingale. This allows us to identify $\langle \Ms^\vpi, \vM \rangle_t$ as the expression given in \eqref{eqn:quadCov}.
\end{proof} \\

Next, we provide a simplified expression for the G\^{a}teaux derivative using the last two results.
\begin{Corollary}
\label{prop:gateaux2}
The G\^{a}teaux derivative given in Proposition \ref{prop:gateaux} can be written as
\begin{align} \label{eqn:gateaux}
\langle \vomega, H_K'(\vpi) \rangle &= \EE \Bigg[ \int_0^{T  \wedge \tau^\vpi} \vomega_t^\intercal Y^\vpi_t \Big( \vg^\vpi_t + F^{\vpi, (1)}_t  \mSigma_t \vsigma_t^\vpi \Big) dt \Biggr] \, .
\end{align}
\end{Corollary}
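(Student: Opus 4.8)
The plan is to prove the corollary by direct substitution, since all the analytic work has already been done. The only two quantities in \eqref{eqn:gateaux2} that remain to be simplified are $\Ms_t^\vpi - \int_0^t h_u^\vpi\,du$ and $d\langle \Ms^\vpi, \vM\rangle_t$, and Lemma \ref{lemma:M_process} supplies closed-form expressions (i) and (ii) for both. First I would insert these two identities into the integrand of \eqref{eqn:gateaux2}, so that the expression in braces becomes $\vg_t^\vpi + F_t^{\vpi,(1)}(Y_t^\vpi - 1)(\vtheta_t - \mSigma_t\vpi_t) + \mSigma_t\big[(Y_t^\vpi-1)q_t^\vpi\vpi_t + F_t^{\vpi,(1)}Y_t^\vpi\vsigma_t^\vpi\big]$. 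After this step the whole derivative is a single $dt$-integral against $\vomega_t^\intercal$, and the remaining work is purely algebraic.

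Next I would group the integrand according to the three vectors $\vtheta_t$, $\mSigma_t\vpi_t$ and $\mSigma_t\vsigma_t^\vpi$. Recalling the definition $\vg_t^\vpi = F_t^{\vpi,(1)}\vtheta_t + F_t^{\vpi,(2)}\mSigma_t\vpi_t$, the coefficient of $\vtheta_t$ is $F_t^{\vpi,(1)} + F_t^{\vpi,(1)}(Y_t^\vpi-1) = F_t^{\vpi,(1)}Y_t^\vpi$. The coefficient of $\mSigma_t\vpi_t$ is $F_t^{\vpi,(2)} - F_t^{\vpi,(1)}(Y_t^\vpi-1) + (Y_t^\vpi-1)q_t^\vpi$, and upon substituting $q_t^\vpi = F_t^{\vpi,(1)} + F_t^{\vpi,(2)}$ the two $(Y_t^\vpi-1)F_t^{\vpi,(1)}$ contributions cancel, leaving $F_t^{\vpi,(2)}Y_t^\vpi$. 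The coefficient of $\mSigma_t\vsigma_t^\vpi$ is simply $F_t^{\vpi,(1)}Y_t^\vpi$. Collecting these, the integrand equals $Y_t^\vpi\big(F_t^{\vpi,(1)}\vtheta_t + F_t^{\vpi,(2)}\mSigma_t\vpi_t\big) + Y_t^\vpi F_t^{\vpi,(1)}\mSigma_t\vsigma_t^\vpi = Y_t^\vpi\big(\vg_t^\vpi + F_t^{\vpi,(1)}\mSigma_t\vsigma_t^\vpi\big)$, which is exactly \eqref{eqn:gateaux}.

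There is no genuine analytic obstacle here: no limits, integrability estimates, or local-martingale arguments are required, as these were already discharged in Proposition \ref{prop:gateaux} and Lemma \ref{lemma:M_process}. The only point that warrants care is the cancellation in the $\mSigma_t\vpi_t$ coefficient, where it is essential to expand $q_t^\vpi = F_t^{\vpi,(1)} + F_t^{\vpi,(2)}$ rather than treat $q_t^\vpi$ as an independent process; tracking the signs of the $(Y_t^\vpi - 1)$ terms correctly is precisely what produces the common factor $Y_t^\vpi$ that can be pulled in front of the entire bracket.
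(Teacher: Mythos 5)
Your proposal is correct and follows exactly the paper's own route: substitute identities (i) and (ii) of Lemma \ref{lemma:M_process} into the expression \eqref{eqn:gateaux2} of Proposition \ref{prop:gateaux}, then use $\vg^\vpi_t = F_t^{\vpi,(1)}\vtheta_t + F_t^{\vpi,(2)}\mSigma_t\vpi_t$ and $q_t^\vpi = F_t^{\vpi,(1)} + F_t^{\vpi,(2)}$ to collect terms. Your explicit bookkeeping of the coefficients of $\vtheta_t$, $\mSigma_t\vpi_t$, and $\mSigma_t\vsigma_t^\vpi$ is precisely the simplification the paper leaves to the reader, and it checks out.
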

\begin{proof}
Using Lemma \ref{lemma:M_process} and Proposition \ref{prop:gateaux} we can write the G\^{a}teaux derivative as { \small
\begin{align*}
\langle \vomega, H_K'(\vpi) \rangle &= \EE \Bigg[ \int_0^{T  \wedge \tau^\vpi} \vomega_t^\intercal \left\{  \vg^\vpi_t + F^{\vpi, (1)}_t \left( Y^\vpi_t - 1 \right) \left( \vtheta_t - \mSigma_t \vpi_t \right) + \mSigma_t \left[ (Y^\vpi_t - 1) q^\vpi_t \vpi_t + F^{\vpi, (1)}_t Y^\vpi_t \vsigma_t^\vpi \right] \right\}  dt \Biggr] \, .
\end{align*} }%
Recalling that $\vg^\vpi_t = F_t^{\vpi,(1)} \vtheta_t + F_t^{\vpi,(2)} \mSigma_t \vpi_t$ and $q_t^\vpi = F^{\vpi, (1)}_t + F^{\vpi, (2)}_t$, substituting these terms into the expression above and simplifying yields the result.
\end{proof} \\

The final result we require before deriving the optimal control is a statement concerning the process pair $(Y^\vpi, \vsigma^\vpi)$ that appear in Lemma \ref{lemma:M_process}.

\begin{Lemma} \label{lemma:BSDE}
The pair $(Y^\vpi, \vsigma^\vpi)$ defined in the Lemma \ref{lemma:M_process} satisfy the backward stochastic differential equation (BSDE)
\begin{equation} \label{eqn:BSDE}
\begin{cases}
d \log Y^\vpi_t = -\ind_t^\vpi \Big( \tfrac{h_t^\vpi}{F^{\vpi, (1)}_t} + \tfrac{q_t^\vpi}{F^{\vpi, (1)}_t} (\vsigma_t^\vpi)^\intercal \mSigma_t \vpi_t + \tfrac{1}{2} (\vsigma_t^\vpi)^\intercal \mSigma_t \vsigma_t^\vpi \Big) \, dt + \ind_t^\vpi (\vsigma_t^\vpi)^\intercal \, d\vM_t ~
\\
\log Y^\vpi_T = 0
\end{cases}
\end{equation}
Furthermore, $\vsigma^\vpi \in \LL_T^2(\RR^n)$ and is $\FF$-predictable.
\end{Lemma}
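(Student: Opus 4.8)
The plan is to build everything on the observation, already contained in the proof of Lemma~\ref{lemma:M_process}, that
\[
N^\vpi_t \;\coloneqq\; F^{\vpi,(1)}_t\,Y^\vpi_t \;=\; \EE_t\!\left[ F^{\vpi,(1)}_{T\wedge\tau^\vpi} \right]
\]
is a (Doob) martingale. Since $F^{\vpi,(1)}\in\LL^{\infty,M}_T(\RR)$ by Lemma~\ref{prop:zeroMeanCondition}, $N^\vpi$ is a bounded, hence square-integrable, martingale, so the martingale representation theorem supplies a predictable $\vphi\in\LL^2_T(\RR^n)$ with $dN^\vpi_t = \vphi_t^\intercal\,d\vM_t$. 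I would then recover $\mu^\vpi$ and $\vsigma^\vpi$ by writing $dN^\vpi$ two ways and matching, and finally apply It\^{o} to $\log Y^\vpi$ (legitimate since $Y^\vpi>0$).

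First I would pin down the drift. Applying the product rule to $N^\vpi=F^{\vpi,(1)}Y^\vpi$, using $dF^{\vpi,(1)}_t=h^\vpi_t\,dt+q^\vpi_t\vpi_t^\intercal\,d\vM_t$ (from Lemma~\ref{lemma:M_process}) and $dY^\vpi_t=Y^\vpi_t\mu^\vpi_t\,dt+Y^\vpi_t(\vsigma^\vpi_t)^\intercal\,d\vM_t$, the drift of $N^\vpi$ equals $Y^\vpi_t\big(h^\vpi_t+F^{\vpi,(1)}_t\mu^\vpi_t+q^\vpi_t(\vsigma^\vpi_t)^\intercal\mSigma_t\vpi_t\big)$. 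As $N^\vpi$ is a martingale this must vanish, and dividing by $Y^\vpi_t>0$ gives
\[
\mu^\vpi_t=-\tfrac{h^\vpi_t}{F^{\vpi,(1)}_t}-\tfrac{q^\vpi_t}{F^{\vpi,(1)}_t}(\vsigma^\vpi_t)^\intercal\mSigma_t\vpi_t
\qquad\text{on }[0,\tau^\vpi).
\]
It\^{o} applied to $\log Y^\vpi$ yields $d\log Y^\vpi_t=\big(\mu^\vpi_t-\tfrac12(\vsigma^\vpi_t)^\intercal\mSigma_t\vsigma^\vpi_t\big)\,dt+(\vsigma^\vpi_t)^\intercal\,d\vM_t$; substituting the expression for $\mu^\vpi_t$ collapses the drift to exactly the integrand of \eqref{eqn:BSDE}. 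The indicator $\ind^\vpi_t$ in \eqref{eqn:BSDE} records that all dynamics are frozen on $[T\wedge\tau^\vpi,T]$; together with $Y^\vpi_{T\wedge\tau^\vpi}=N^\vpi_{T\wedge\tau^\vpi}/F^{\vpi,(1)}_{T\wedge\tau^\vpi}=1$ this delivers the terminal condition $\log Y^\vpi_T=0$.

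For $\vsigma^\vpi$ itself, matching the martingale parts of the two decompositions of $N^\vpi$ gives $\vphi_t=Y^\vpi_t\big(q^\vpi_t\vpi_t+F^{\vpi,(1)}_t\vsigma^\vpi_t\big)$, hence
\[
\vsigma^\vpi_t=\frac{\vphi_t}{N^\vpi_t}-\frac{q^\vpi_t}{F^{\vpi,(1)}_t}\vpi_t .
\]
Predictability of $\vsigma^\vpi$ is inherited from $\vphi$ and $\vpi$ and from the continuous, adapted---hence predictable---processes $N^\vpi$, $q^\vpi$, $F^{\vpi,(1)}$, by the same reasoning used for the integrands in Lemma~\ref{prop:zeroMeanCondition}.

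The square-integrability $\vsigma^\vpi\in\LL^2_T(\RR^n)$ is where I expect the real difficulty. Because $N^\vpi$ is a bounded martingale, It\^{o}'s isometry gives $\EE\!\int_0^T\vphi_t^\intercal\mSigma_t\vphi_t\,dt=\EE[(N^\vpi_T)^2]-(N^\vpi_0)^2<\infty$, and the non-degeneracy bound $\varepsilon\|\mx\|^2\le\mx^\intercal\mSigma_t\mx$ of Assumption~\ref{asmp:posDef} upgrades this to $\vphi\in\LL^2_T(\RR^n)$. Since $\vpi\in\LL^2_T(\RR^n)$ by admissibility, the triangle inequality reduces the claim to showing that the scalar factors $1/N^\vpi_t$ and $q^\vpi_t/F^{\vpi,(1)}_t=1+F^{\vpi,(2)}_t/F^{\vpi,(1)}_t$ are bounded on $[0,T\wedge\tau^\vpi]$. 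Their upper bounds follow from $|X^\vpi|\le K$ on the constrained horizon and the continuity of $U^{(1)},U^{(2)}$ (as in Lemma~\ref{prop:zeroMeanCondition}); the main obstacle is the positive lower bound on $N^\vpi$, equivalently on $F^{\vpi,(1)}_t=U^{(1)}(X^\vpi_t)X^\vpi_t$, which requires controlling the wealth process away from $0$ over the constrained horizon. Once that lower bound is in hand, all factors are bounded, the triangle inequality closes the $\LL^2$ estimate, and the proof is complete.
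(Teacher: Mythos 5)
Your derivation of the BSDE itself is correct and is, in substance, the paper's own argument in different notation: your martingale $N^\vpi_t = F^{\vpi,(1)}_t Y^\vpi_t = \EE_t\big[F^{\vpi,(1)}_{T\wedge\tau^\vpi}\big]$ coincides, up to the constant factor $F^{\vpi,(1)}_0$, with the paper's martingale $\Gamma_t Y^\vpi_t = \EE_t[\Gamma_T]$, because the paper's $\Gamma$ is exactly the (stopped) process $F^{\vpi,(1)}_{t\wedge\tau^\vpi}/F^{\vpi,(1)}_0$, it being the solution of the same linear SDE $d\Gamma_t = \Gamma_t\ind^\vpi_t\big(\tfrac{h^\vpi_t}{F^{\vpi,(1)}_t}dt + \tfrac{q^\vpi_t}{F^{\vpi,(1)}_t}\vpi_t^\intercal d\vM_t\big)$. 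Your product-rule drift matching, the resulting $\mu^\vpi$, the It\^{o} step to $\log Y^\vpi$, and the terminal condition all reproduce the paper's proof.

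The genuine gap is in the final part, and you flagged it yourself without closing it: your proof of $\vsigma^\vpi\in\LL^2_T(\RR^n)$ rests on a \emph{uniform} positive lower bound for $N^\vpi$, equivalently for $F^{\vpi,(1)}_t = U^{(1)}(X^\vpi_t)\,X^\vpi_t$, i.e.\ on keeping wealth uniformly away from $0$ on $[0,T\wedge\tau^\vpi]$. No such bound is available in this setup: the $K$-constraint only stops the portfolio when $|X^\vpi_t|$ exceeds $K$, so wealth is capped from above but can come arbitrarily close to $0$ ($\log X^\vpi_t = \log x + \int_0^t \gamma^\vpi_u\,du + \int_0^t \vpi_u^\intercal d\vM_u$ is a.s.\ finite but admits no deterministic lower bound), and $U^{(1)}(x)\,x$ may vanish as $x\downarrow 0$ (e.g.\ power utility with $0<\eta<1$). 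Path continuity only gives an a.s.\ finite \emph{random} bound $\sup_{t\le T} 1/N^\vpi_t < \infty$, which cannot be pulled out of the expectation in your estimate of $\EE\int_0^T \|\vphi_t/N^\vpi_t\|^2\,dt$; so the triangle-inequality closure fails and the $\LL^2$ claim is not established. The paper's route avoids this obstacle entirely: it bounds $\EE\int_0^T (Y^\vpi_t)^2\,dt$ directly (Jensen, tower property, and a stochastic-exponential estimate) and then reads the square-integrability and predictability of $\vsigma^\vpi$ off the decomposition of $\log Y^\vpi$, never requiring wealth to be bounded away from zero. A secondary caveat: your appeal to the martingale representation theorem is not justified at this level of generality, since $\vM$ is an arbitrary continuous martingale and $\FF$ is generated by all model processes, so the predictable representation property need not hold; this is harmless only because $\vsigma^\vpi$ already exists by hypothesis of Lemma \ref{lemma:M_process}, so $\vphi_t \coloneqq Y^\vpi_t\big(q^\vpi_t\vpi_t + F^{\vpi,(1)}_t\vsigma^\vpi_t\big)$ can be defined from the product rule rather than from a representation theorem.
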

\begin{proof}
We begin by writing
\begin{align*}
Y^\vpi_t &= \EE_t \left[ \exp \left( \int_t^{ T \wedge \tau^\vpi } \left[ \tfrac{h_u^\vpi}{F^{\vpi, (1)}_u} - \tfrac{1}{2} \left(\tfrac{q_u^\vpi}{F^{\vpi, (1)}_u}\right)^2 \vpi_u^\intercal \mSigma_u \vpi_u \right] \, du + \int_t^{ T \wedge \tau^\vpi } \tfrac{q_u^\vpi}{F^{\vpi, (1)}_u} \vpi_u^\intercal \, d\vM_u \right) \right]
\\
&= \EE_t \left[ \exp \left( \int_t^T \ind_u^\vpi \left[ \tfrac{h_u^\vpi}{F^{\vpi, (1)}_u} - \tfrac{1}{2} \left(\tfrac{q_u^\vpi}{F^{\vpi, (1)}_u}\right)^2 \vpi_u^\intercal \mSigma_u \vpi_u \right] \, du + \int_t^T \ind_u^\vpi \tfrac{q_u^\vpi}{F^{\vpi, (1)}_u} \vpi_u^\intercal \, d\vM_u \right) \right]
\\
& = \EE_t \left[ \frac{\Gamma_T}{\Gamma_t} \right] \qquad
\\
\implies \qquad \Gamma_t Y^\vpi_t &= \EE_t \left[ \Gamma_T \right] \, ,
\end{align*}
where $\Gamma$ satisfies the SDE
\[
d \Gamma_t = \Gamma_t \ind_t^\vpi \Big( \tfrac{h_u^\vpi}{F^{\vpi, (1)}_u} \, dt + \tfrac{q_t^\vpi}{F^{\vpi, (1)}_t} \vpi_t^\intercal \, d\vM_t \Big) ~, \qquad \Gamma_0 = 1 \, .
\]
Also, since $Y^\vpi$ stops once $\tau^\vpi$ is reached we may write
\[
dY^\vpi_t = \ind_t^\vpi Y^\vpi_t \mu_t^\vpi \, dt + \ind_t^\vpi Y^\vpi_t (\vsigma_t^\vpi)^\intercal \, d\vM_t ~, \qquad Y^\vpi_T = 1 \, .
\]
To find $\mu^\vpi$ we apply It\^{o}'s product rule to obtain
{ \footnotesize
\begin{align*}
d(Y^\vpi_t \Gamma_t) &= Y^\vpi_t d\Gamma_t  + \Gamma_t dY^\vpi_t  + d[\Gamma, Y^\vpi]_t
\\
&= Y^\vpi_t \Gamma_t \ind_t^\vpi \Big( \tfrac{h_t^\vpi}{F^{\vpi, (1)}_t} \, dt + \tfrac{q_t^\vpi}{F^{\vpi, (1)}_t} \vpi_t^\intercal d\vM_t \Big) + \Gamma_t \ind_t^\vpi \Big( Y^\vpi_t \mu_t^\vpi \, dt + Y^\vpi_t (\vsigma_t^\vpi)^\intercal \, d\vM_t \Big)+  Y^\vpi_t \Gamma_t \ind_t^\vpi \tfrac{q_t^\vpi}{F^{\vpi, (1)}_t} (\vsigma_t^\vpi)^\intercal \mSigma_t \vpi_t \, dt
\\
&= Y_t^\vpi \Gamma_t \Big( \ind_t^\vpi \mu_t^\vpi + \ind_t^\vpi \tfrac{h_t^\vpi}{F^{\vpi, (1)}_t} + \ind_t^\vpi \tfrac{q_t^\vpi}{F^{\vpi, (1)}_t} (\vsigma_t^\vpi)^\intercal \mSigma_t \vpi_t \Big) \, dt + \ind_t^\vpi  Y_t^\vpi \Gamma_t \Big(  \vsigma_t^\vpi + \tfrac{q_t^\vpi}{F^{\vpi, (1)}_t} \vpi_t \Big)^\intercal \, d\vM_t \, .
\end{align*} }%
Since $Y^\vpi \Gamma$ is a martingale, the drift term in the SDE above must be equal to zero. Therefore, $\mu^\vpi$ is given by
\[
\mu_t^\vpi = - \Big( \tfrac{h_t^\vpi}{F^{\vpi, (1)}_t} + \tfrac{q_t^\vpi}{F^{\vpi, (1)}_t} (\vsigma_t^\vpi)^\intercal \mSigma_t \vpi_t \Big) \, .
\]
Substituting back into the SDE satisfied by $Y^\vpi$ and applying It\^{o}'s lemma yields the result.

Next, we have that
{\footnotesize
\begin{align*}
\EE \left[ \int_0^T (Y_t^\vpi)^2 \, dt \right] &= \EE \left[ \int_0^T \left\{ \EE_t \left[ \exp \left( \int_t^{ T \wedge \tau^\vpi } \left[ \tfrac{h_u^\vpi}{F^{\vpi, (1)}_u} - \tfrac{1}{2} \left(\tfrac{q_u^\vpi}{F^{\vpi, (1)}_u}\right)^2 \vpi_u^\intercal \mSigma_u \vpi_u \right] \, du + \int_t^{ T \wedge \tau^\vpi } \tfrac{q_u^\vpi}{F^{\vpi, (1)}_u} \vpi_u^\intercal \, d\vM_u \right) \right] \right\}^2 \, dt \right]
\\
& \leq \EE \left[ \int_0^T \EE_t \left[ \exp \left( 2 \int_t^{ T \wedge \tau^\vpi }  \left[ \tfrac{h_u^\vpi}{F^{\vpi, (1)}_u} - \tfrac{1}{2} \left(\tfrac{q_u^\vpi}{F^{\vpi, (1)}_u}\right)^2 \vpi_u^\intercal \mSigma_u \vpi_u \right] \, du + 2 \int_t^{ T \wedge \tau^\vpi } \tfrac{q_u^\vpi}{F^{\vpi, (1)}_u} \vpi_u^\intercal \, d\vM_u \right) \right] \, dt \right]
\\
& = \int_0^T \EE \left[  \EE_t \left[ \exp \left( 2 \int_t^{ T \wedge \tau^\vpi }  \left[ \tfrac{h_u^\vpi}{F^{\vpi, (1)}_u} - \tfrac{1}{2} \left(\tfrac{q_u^\vpi}{F^{\vpi, (1)}_u}\right)^2 \vpi_u^\intercal \mSigma_u \vpi_u \right] \, du + 2 \int_t^{ T \wedge \tau^\vpi } \tfrac{q_u^\vpi}{F^{\vpi, (1)}_u} \vpi_u^\intercal \, d\vM_u \right) \right] \right] \, dt
\\
& = \int_0^T \EE \left[ \exp \left( 2 \int_t^{ T \wedge \tau^\vpi }  \left[ \tfrac{h_u^\vpi}{F^{\vpi, (1)}_u} - \tfrac{1}{2} \left(\tfrac{q_u^\vpi}{F^{\vpi, (1)}_u}\right)^2 \vpi_u^\intercal \mSigma_u \vpi_u \right] \, du + 2 \int_t^{ T \wedge \tau^\vpi } \tfrac{q_u^\vpi}{F^{\vpi, (1)}_u} \vpi_u^\intercal \, d\vM_u \right) \right] \, dt
\\
& = \int_0^T \EE \left[ \exp \left( 2 \int_t^{ T \wedge \tau^\vpi }  \left[ \tfrac{h_u^\vpi}{F^{\vpi, (1)}_u} - \tfrac{1}{2} \left(\tfrac{q_u^\vpi}{F^{\vpi, (1)}_u}\right)^2 \vpi_u^\intercal \mSigma_u \vpi_u \right] \, du + 2 \int_t^{ T \wedge \tau^\vpi } \tfrac{q_u^\vpi}{F^{\vpi, (1)}_u} \vpi_u^\intercal \, d\vM_u \right) \right] \, dt
\\
& = \int_0^T \EE \left[ \exp \left( \int_t^{ T \wedge \tau^\vpi }  \tfrac{2 h_u^\vpi}{F^{\vpi, (1)}_u} \right) \, \mathcal{E} \left(  \int_t^{ T \wedge \tau^\vpi }  \tfrac{2 q_u^\vpi}{F^{\vpi, (1)}_u} \vpi_u^\intercal \, d\vM_u \right) \right] \, dt
\\
& < \infty
\end{align*}
}%
since the first and second integrands in the last line above are bounded and in $\LL^2_T(\RR^n)$, respectively. This implies that $\log Y_T^\vpi \in \LL^2_T(\RR)$ by Jensen's inequality. However, we also have that
\begin{align*}
\log Y_t^\vpi = \int_0^t \left(\mu_u^\vpi - \tfrac{1}{2} \langle Y^\vpi , Y^\vpi \rangle_u \right) du + \int_0^t (\vsigma_u^\vpi)^\intercal d\vM_u \, ,
\end{align*}
which is in $\LL^2_T(\RR)$ and $\FF$-adapted only if $\vsigma^\vpi \in \LL^2_T(\RR^n)$ and $\FF$-predictable.
\end{proof} \\

\subsection{Optimality}

We proceed to finding the optimal control for the stochastic control problem \eqref{eqn:control}. To this end, we use the results of the previous section to find the \textit{unique} control that causes the G\^{a}teaux derivative to vanish and relate it to the solution of a FBSDE. We begin by providing a necessary and sufficient condition for the G\^{a}teaux derivative to vanish in the constrained problem.
\begin{Proposition} \label{prop:gat_vanish}
The G\^{a}teaux derivative \eqref{eqn:gateaux} vanishes in all directions, i.e. $\langle \vomega , H_K'(\vpi) \rangle = 0$ for all $\vomega \in \As_K$, if and only if $\vg^\vpi_t + F^{\vpi, (1)}_t  \mSigma_t \vsigma_t^\vpi = 0$ $t$-a.e. in the interval $[0,{T  \wedge \tau^\vpi}]$, $\PP$-a.s..
\end{Proposition}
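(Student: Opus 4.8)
The plan is to read this off as a fundamental-lemma-of-the-calculus-of-variations statement for the simplified derivative of Corollary~\ref{prop:gateaux2}. Writing $\vPhi^\vpi_t \coloneqq \vg^\vpi_t + F^{\vpi,(1)}_t \mSigma_t \vsigma^\vpi_t$, the derivative reads $\langle \vomega, H_K'(\vpi)\rangle = \EE\big[\int_0^{T\wedge\tau^\vpi} Y^\vpi_t\, \vomega_t^\intercal \vPhi^\vpi_t\, dt\big]$. The sufficiency (``if'') direction is then immediate: if $\vPhi^\vpi_t = 0$ $t$-a.e.\ on $[0,T\wedge\tau^\vpi]$, $\PP$-a.s., the integrand vanishes identically and the derivative is zero for every $\vomega \in \As_K$.

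For necessity, the idea is to feed the functional a test direction pointing along $\vPhi^\vpi$ itself and to exploit the strict positivity $Y^\vpi_t>0$ established in Lemma~\ref{lemma:M_process}. Concretely, I would take, for $N \in \NN$,
\begin{equation*}
\vomega^{(N)}_t \coloneqq \ind^\vpi_t \,\vPhi^\vpi_t\, \ind\{\|\vPhi^\vpi_t\| \le N\}\,,
\end{equation*}
so that substitution gives $\langle \vomega^{(N)}, H_K'(\vpi)\rangle = \EE\big[\int_0^{T\wedge\tau^\vpi} Y^\vpi_t \|\vPhi^\vpi_t\|^2\, \ind\{\|\vPhi^\vpi_t\|\le N\}\, dt\big]$, an expectation of a \emph{nonnegative} integrand since $Y^\vpi_t>0$. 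If the derivative vanishes in all directions it vanishes for each $\vomega^{(N)}$; letting $N\to\infty$ and applying monotone convergence yields $\EE\big[\int_0^{T\wedge\tau^\vpi} Y^\vpi_t \|\vPhi^\vpi_t\|^2\, dt\big] = 0$. A nonnegative integrand with zero expectation is zero $(\PP\times dt)$-a.e.\ on $\{t\le T\wedge\tau^\vpi\}$, and since $Y^\vpi_t>0$ this forces $\|\vPhi^\vpi_t\|=0$, i.e.\ $\vPhi^\vpi_t=0$, $t$-a.e.\ on $[0,T\wedge\tau^\vpi]$, $\PP$-a.s., which is the claim.

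Before this closes I must verify that each $\vomega^{(N)}$ is genuinely admissible. Predictability is inherited from that of $F^{\vpi,(k)}$, $\vtheta$, $\mSigma$, $\vpi$ and $\vsigma^\vpi$ (all established earlier), and the truncation keeps $\vomega^{(N)}$ bounded, hence in $\LL_T^2(\RR^n)$. The truncation also legitimizes the monotone-convergence step: the integrand is dominated by $N^2 Y^\vpi_t$, and $Y^\vpi \in \LL^2_T(\RR) \subset \LL^1_T(\RR)$ on the finite horizon by Lemma~\ref{lemma:BSDE}, so $\EE\big[\int_0^T N^2 Y^\vpi_t\, dt\big]<\infty$. (That $\vPhi^\vpi \in \LL^2_T(\RR^n)$ on the stopped interval follows from boundedness of $F^{\vpi,(1)}, F^{\vpi,(2)}$, $\vtheta$ and $\mSigma$ together with $\vpi, \vsigma^\vpi \in \LL^2_T(\RR^n)$.)

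The main obstacle is the $K$-stopping requirement built into $\As_K$: membership demands that a direction be switched off once its \emph{own} wealth process exceeds $K$, whereas the natural factor $\ind^\vpi_t$ only switches off at $\tau^\vpi$. I would resolve this by replacing $\vomega^{(N)}$ with its $K$-stopped version; multiplying the integrand by the extra indicator $\ind\{t<\tau^{\vomega^{(N)}}\}$ only removes mass and preserves nonnegativity, so the vanishing of the integral still forces $\vPhi^\vpi=0$ on the surviving portion of $[0,T\wedge\tau^\vpi]$. Localizing over an exhausting sequence of stopping times (or, equivalently, scaling the direction by a small constant so that its wealth cannot reach $K$ before $\tau^\vpi$) then recovers the conclusion on all of $[0,T\wedge\tau^\vpi]$. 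This reconciliation of the test-direction construction with the $K$-stopping constraint is the only genuinely delicate part; the variational core is routine.
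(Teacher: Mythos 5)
Your proposal is correct and takes essentially the same approach as the paper: both directions rest on testing the derivative along a direction proportional to $\vg^\vpi_t + F^{\vpi,(1)}_t \mSigma_t \vsigma_t^\vpi$ itself and invoking the strict positivity $Y^\vpi_t > 0$ to conclude the expression vanishes $(\PP\times dt)$-a.e. The only differences are cosmetic: the paper argues by contradiction, using the indicator of the set where the expression is nonzero to guarantee admissibility (rather than your level-$N$ truncation plus monotone convergence), and it handles the $K$-stopping constraint exactly as you suggest, by appending the test direction's own stopping indicator $\ind^\vomega_t$.
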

\begin{proof}
The G\^{a}teaux derivative is:
\begin{align*}
\langle \vomega, H_K'(\vpi) \rangle &= \EE \Bigg[ \int_0^{T  \wedge \tau^\vpi} \vomega_t^\intercal Y^\vpi_t \Big( \vg^\vpi_t + F^{\vpi, (1)}_t  \mSigma_t \vsigma_t^\vpi \Big) \, dt \Biggr] \, .
\end{align*}
Clearly, if $\vg^\vpi_t + F^{\vpi, (1)}_t  \mSigma_t \vsigma_t^\vpi = 0$ then $\langle \vomega , H_K'(\vpi) \rangle = 0$ for all $\vomega \in \As_K$.

We prove necessity by contradiction. Assume that $\langle \vomega , H_K'(\vpi) \rangle = 0$ for all $\vomega \in \As_K$. Assume further that
\[
B = \left\{ (\omega, t) \in \Omega \times [0,T \wedge \tau^\vpi] : \left(\vg^\vpi_t + F^{\vpi, (1)}_t  \mSigma_t \vsigma_t^\vpi\right)(\omega) \neq 0 \right\}
\]
has positive measure. Now define the process
\[
\vomega_t = \left[\left(\vg^\vpi_t + F^{\vpi, (1)}_t  \mSigma_t \vsigma_t^\vpi\right) \ind_B\right] \ind^\vomega_t
\]
This is an admissible portfolio since it is $\FF$-predictable, $F^{\vpi, (1)}$ is bounded, $\vg^\vpi, \vsigma^\vpi \in \LL^2_T(\RR^n)$ and the process is stopped once the wealth threshold is reached. It follows that
\begin{align*}
\langle \vomega, H_K'(\vpi) \rangle &= \EE \Bigg[ \int_0^{T  \wedge \tau^\vpi \wedge \tau^\vomega } Y^\vpi_t \, \vomega_t^\intercal \vomega_t \ind_B \, dt \Biggr] > 0 \,
\end{align*}
since $Y^\vpi_t > 0$ and $\vomega_t^\intercal \vomega_t > 0$ on the set $B$. This gives our contradiction and hence $B$ must have zero measure, which completes the proof.
\end{proof} \\

We now present our main theorem which characterizes the optimal portfolio for our stochastic control problem.
\begin{Theorem} \label{thm:optCont}
Define the processes $\zeta = \left( \zeta_t \right)_{t \geq 0}$ and $\phi = \left( \phi_t \right)_{t \geq 0}$ by
\begin{align}
\zeta_t &= -\frac{F_t^{\vpi,(1)} }{F_t^{\vpi,(2)} } \, ,
\qquad \qquad
\phi_t = \frac{F_t^{\vpi,(3)} }{F_t^{\vpi,(2)} } \, ,
\end{align}
and the portfolio process $\vpi^* = (\vpi_t^*)_{t\geq 0}$ by
\begin{equation} \label{eqn:optCont}
\vpi_t^* = \zeta_t \left( \mSigma_t^{-1} \vtheta_t + \vsigma_t^{\vpi^*} \right) \;,
\end{equation}
along with the FBSDE
{\small
\begin{equation} \label{eqn:FBSDE}
\begin{cases}
d X^{\vpi^*}_t  = X^{\vpi^*}_t \big(r_t + \zeta_t \vtheta_t^\intercal \mSigma^{-1}_t \vtheta_t +  \zeta_t \vtheta_t^\intercal \vsigma^{\vpi^*}_t \big) \, dt + X^{\vpi^*}_t \zeta_t  \left( \mSigma_t^{-1} \vtheta_t + \vsigma_t^{\vpi^*} \right)^\intercal d\vM_t\\
X_0^{\vpi^*} = x
\\
d \log Y^{\vpi^*}_t = \Big( A_t r_t + B_t \vtheta^\intercal \mSigma^{-1}_t \vtheta_t + B_t \vtheta_t^\intercal \vsigma^{\vpi^*}_t + \left( B_t +\tfrac{1}{2} \zeta_t \phi_t \right) (\vsigma_t^{\vpi^*})^\intercal \mSigma_t \vsigma_t^{\vpi^*} \Big) \, dt + (\vsigma_t^{\vpi^*})^\intercal \, d\vM_t ~
\\
\log Y^{\vpi^*}_T = 0
\end{cases}
\end{equation}
}%
where $A_t = \tfrac{1}{\zeta_t} - 1$ and $B_t = 1 + \tfrac{1}{2} \zeta_t \phi_t$.

If one of the two following conditions holds:
\begin{enumerate}[label=(\alph*),labelsep=0.5cm]
	\item $\zeta \in \LL^{\infty,M}_T(\RR)$, or
	\item $\zeta_t X_t^{\vpi^*}$ is $\PP$-a.s. continuous with $\EE\left[ \underset{t \in [0,T]}{\sup} ~(\zeta_t X_t^{\vpi^*})^2 \right] < \infty$ and the wealth equation \eqref{eqn:dollarSDE} corresponding to $\widetilde{\vpi}_t^* = \vpi_t^* X_t^{\vpi^*}$ has a unique square-integrable solution with $\EE [(X_T^{\vpi^*})^2] < \infty$,
\end{enumerate}
then $\vpi^*$ is an admissible portfolio and is the unique solution to the stochastic control problem \eqref{eqn:control} and, furthermore, the FBSDE \eqref{eqn:FBSDE} has a unique solution.
\end{Theorem}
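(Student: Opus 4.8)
The plan is to read the optimal control off the first-order condition, verify by substitution that it generates the FBSDE, secure admissibility under either technical hypothesis, and then transfer the constrained optimality to the unconstrained problem by letting $K\to\infty$. First I would fix $K$ and let $\vpi$ denote the unique maximizer of the constrained problem \eqref{eqn:constrained_control}, which exists by Proposition~\ref{prop:concave}. Since $\vpi$ maximizes the concave, G\^{a}teaux-differentiable functional $H_K$ (Corollary~\ref{prop:gateaux2}), the first-order condition $\langle\vomega,H_K'(\vpi)\rangle=0$ holds for all $\vomega\in\As_K$. Proposition~\ref{prop:gat_vanish} then forces $\vg_t^\vpi+F_t^{\vpi,(1)}\mSigma_t\vsigma_t^\vpi=0$ for a.e.\ $t\in[0,T\wedge\tau^\vpi]$, $\PP$-a.s. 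Using $\vg_t^\vpi=F_t^{\vpi,(1)}\vtheta_t+F_t^{\vpi,(2)}\mSigma_t\vpi_t$, the invertibility of $\mSigma_t$ (Assumption~\ref{asmp:posDef}) and $F_t^{\vpi,(2)}\neq0$ (strict concavity of $U$), I would solve this linear relation for $\vpi_t$ and obtain exactly $\vpi_t=\zeta_t\big(\mSigma_t^{-1}\vtheta_t+\vsigma_t^\vpi\big)$, which is the asserted form \eqref{eqn:optCont}.

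Next I would verify that the triple $(X^{\vpi^*},Y^{\vpi^*},\vsigma^{\vpi^*})$ solves \eqref{eqn:FBSDE}. The forward equation is immediate, being the wealth dynamics \eqref{eqn:portSDE} with $\vpi^*$ inserted and $\vpi_t^{*\intercal}\vtheta_t=\zeta_t(\vtheta_t^\intercal\mSigma_t^{-1}\vtheta_t+\vtheta_t^\intercal\vsigma_t^{\vpi^*})$ expanded. For the backward equation I would start from the BSDE of Lemma~\ref{lemma:BSDE}, substitute $\vpi^*$, and rewrite every coefficient through $\zeta_t$ and $\phi_t$ via the identities $F_t^{\vpi,(2)}/F_t^{\vpi,(1)}=-1/\zeta_t$ and $F_t^{\vpi,(3)}/F_t^{\vpi,(2)}=\phi_t$ (so that $q_t^\vpi/F_t^{\vpi,(1)}=1-1/\zeta_t$). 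Collecting the resulting scalar quadratic forms $r_t$, $\vtheta_t^\intercal\mSigma_t^{-1}\vtheta_t$, $\vtheta_t^\intercal\vsigma_t^{\vpi^*}$ and $(\vsigma_t^{\vpi^*})^\intercal\mSigma_t\vsigma_t^{\vpi^*}$ reproduces the backward drift displayed in \eqref{eqn:FBSDE} with the coefficients $A_t$ and $B_t$; this reduction is direct but lengthy, and on the stopped interval the indicator $\ind_t^{\vpi^*}$ is identically one.

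To finish admissibility and optimality I would proceed as follows. Admissibility requires $\vpi^*\in\LL^2_T(\RR^n)$. Under hypothesis (a), $\zeta$ is bounded, $\mSigma_t^{-1}\vtheta_t$ is bounded by Assumptions~\ref{asmp:growthMtg} and~\ref{asmp:posDef}, and $\vsigma^{\vpi^*}\in\LL^2_T(\RR^n)$ by Lemma~\ref{lemma:BSDE}, so $\vpi^*=\zeta(\mSigma^{-1}\vtheta+\vsigma^{\vpi^*})\in\LL^2_T(\RR^n)$; under hypothesis (b) the same membership follows from the postulated square-integrability of $\zeta_tX_t^{\vpi^*}$ together with well-posedness of the dollar-amount equation \eqref{eqn:dollarSDE} for $\widetilde{\vpi}^*_t=\vpi^*_tX_t^{\vpi^*}$. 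Granting admissibility, the argument of the first paragraph shows that $\vpi^*$ is the unique stationary point and hence the unique maximizer of each constrained problem. Writing $\vpi^{(K)}$ for this optimizer at level $K$, I would then let $K\to\infty$: because the wealth path is continuous and finite on $[0,T]$ one has $\PP(\tau^{\vpi^{(K)}}>T)\to1$, so the stopping recedes and the $\vpi^{(K)}$ converge to $\vpi^*$; combined with the uniform integrability furnished by (a) or (b) this lets me pass $H_K(\vpi^{(K)})\to H(\vpi^*)$ and identify $\vpi^*$ as the admissible optimizer of \eqref{eqn:control}, unique by strict concavity.

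Finally, for uniqueness of the FBSDE I would run the correspondence in reverse: any solution triple determines, through $\zeta_t(\mSigma_t^{-1}\vtheta_t+\vsigma_t)$, a control that annihilates the G\^{a}teaux derivative and is therefore the unique global maximizer; the maximizer then pins down $X$ via the forward equation and, through the definition of $Y^{\vpi^*}$ and martingale representation, the pair $(Y,\vsigma)$, so the solution is unique. I expect the passage $K\to\infty$ to be the principal obstacle: one must control the constrained optimizers and their associated $\vsigma^{\vpi^{(K)}}$ uniformly in $K$, ensure the limiting control is genuinely admissible rather than a formal critical point, and justify interchanging the limit with the expectation defining the performance criterion. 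This is precisely where hypotheses (a) and (b) enter, since without a uniform bound on $\zeta$ or square-integrability of the terminal wealth neither the admissibility of $\vpi^*$ nor the identification of the supremum can be guaranteed.
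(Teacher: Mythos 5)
Your overall architecture matches the paper's: use Proposition \ref{prop:gat_vanish} to tie optimality to the vanishing of the G\^{a}teaux derivative, solve $\vg_t^\vpi + F_t^{\vpi,(1)}\mSigma_t\vsigma_t^\vpi = 0$ for $\vpi$ to obtain \eqref{eqn:optCont}, verify the FBSDE by substituting into \eqref{eqn:portSDE} and the BSDE of Lemma \ref{lemma:BSDE}, and prove admissibility under (a) by boundedness and under (b) through the dollar-amount reformulation (the paper cites Lemma 3.1 of \cite{lim2004quadratic} here). The genuine gap is in your passage $K\to\infty$. You argue that the constrained optimizers $\vpi^{(K)}$ converge to $\vpi^*$ and that $H_K(\vpi^{(K)})\to H(\vpi^*)$, and from this you ``identify'' $\vpi^*$ as the optimizer of \eqref{eqn:control}. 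But since $\As_K\subset\As$ and $H_K=H$ on $\As_K$, this chain only yields $H(\vpi^*)=\lim_K \sup_{\As_K}H_K\le \sup_{\As}H$; it does not exclude that some admissible portfolio that is never stopped does strictly better than $\vpi^*$. To close the argument by your route you would need the reverse inequality, i.e.\ that stopped portfolios are dense in value: for \emph{every} $\vomega\in\As$, $H(\vomega_K)\to H(\vomega)$ as $K\to\infty$. That requires uniform integrability of $U(X_T^{\vomega_K})$ for arbitrary admissible competitors $\vomega$, which hypotheses (a)/(b) cannot supply, since they constrain only $\zeta$ and the candidate's wealth $X^{\vpi^*}$. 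The paper avoids this issue entirely: it establishes that the \emph{unconstrained} G\^{a}teaux derivative vanishes at $\vpi^*$, i.e.\ $\langle\vomega,H'(\vpi^*)\rangle=0$ for all $\vomega\in\As$, via a Moore--Osgood interchange of the limits $K\to\infty$ and $\epsilon_n\to0$ applied to the difference quotients $a_{nK}=\epsilon_n^{-1}\bigl(H_K(\vpi_K^*+\epsilon_n\vomega_K)-H_K(\vpi_K^*)\bigr)$, using $\vpi_K^*\to\vpi^*$ in $\LL^2$ and continuity of $H$ for the inner limit; it then combines uniqueness of this stationary point with the existence of a unique global maximizer of \eqref{eqn:control}, which must itself be stationary.

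Two secondary points. First, you assert that $H_K$ is concave in $\vpi$, so that the constrained maximizer satisfies the first-order \emph{equality}. The paper never shows concavity of $H_K$ in the weight parametrization; strict concavity is established only for $J$ in the dollar-amount variable $\widetilde{\vpi}$, which is precisely why Proposition \ref{prop:concave} detours through the auxiliary problem. The implication ``maximizer $\Rightarrow$ stationary point'' is also what the paper uses, but it rests on that correspondence, not on concavity of $H_K$ itself. Second, your $\vpi^{(K)}$ is defined through the $K$-level BSDE, so its asserted convergence to $\vpi^*$ is a stability statement about the truncated BSDEs that you do not prove; the paper instead takes $\vpi_K^*$ to be the $K$-stopped version of the single fixed process $\vpi^*$, for which the $\LL^2$ convergence $\vpi_K^*\to\vpi^*$ is immediate from recession of the stopping times.
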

\begin{Remark}
The process $\frac{1}{\zeta_t} = -\frac{F_t^{\vpi,(2)} }{F_t^{\vpi,(1)}} = -\frac{U''(X_t^\vpi) (X_t^\vpi)}{U'(X_t^\vpi)} $ is the Arrow–Pratt measure of relative  risk aversion (or coefficient of relative risk aversion).
\end{Remark}
\begin{Remark}
The BSDE component of \eqref{eqn:FBSDE} is a quadratic BSDE. The existence and uniqueness of solutions to BSDEs of this type are discussed in \cite{kobylanski2000backward} when the noise process is a Brownian motion and in \cite{morlais2009quadratic} for the case of more general martingale noise processes. When the equations are fully coupled the reader is referred to \cite{luo2015solvability} and \cite{luo2017solvability} for existence and uniqueness results.
\end{Remark}
\begin{proof}
First, we show that $\vpi^*$ is an admissible portfolio for the unconstrained problem \eqref{eqn:control} and hence its stopped counterpart $\vpi_K^*$ is an admissible portfolio for the constrained problem \eqref{eqn:constrained_control}. Since all the processes appearing in \eqref{eqn:optCont} are either continuous and adapted or predictable, we have that $\vpi^*$ is predictable under either condition (a) or (b). Next, recall that $\vtheta$ is bounded and, from Appendix A of \cite{alaradi2018}, the elements of $\mSigma^{-1}$ are also bounded and we have that $\vsigma^\vpi \in \LL^2_T(\RR^n)$. Thus, when $\zeta$ is bounded $\vpi^* \in \LL^2_T(\RR^n)$ and is clearly an admissible portfolio. To show admissibility under the alternative condition on $\zeta$ we work with the auxiliary control problem where the control process is given in terms of dollar amounts as in the proof of Proposition \ref{prop:concave}. The dollar amount control process corresponding to the weight process \eqref{eqn:optCont} is
\[
\widetilde{\vpi}^*_t = \zeta_t X_t^{\vpi^*} \left( \mSigma_t^{-1} \vtheta_t + \vsigma_t^{\vpi^*} \right)
\]
This process is locally square-integrable, i.e. $\int_0^T \| \widetilde{\vpi}^*_t \|^2 < \infty$ $\PP$-a.s., due to the assumptions made in condition (b). Therefore, by Lemma 3.1 of \cite{lim2004quadratic}, $\widetilde{\vpi}^*$ is an admissible \textit{dollar amount} process and hence $\vpi^*$ must be an admissible weight process.

Next, it is easy to verify that ${\vpi_K^*}$ is the unique portfolio that satisfies
\[
\vg^{\vpi_K^*}_t + F^{{\vpi_K^*}, (1)}_t  \mSigma_t \vsigma_t^{\vpi_K^*} = 0 \, ~~ \text{ \textit{on} } ~~ [0, T \wedge \tau^\vpi].
\]
This in turn implies by Proposition \ref{prop:gat_vanish} that ${\vpi_K^*}$ is the unique portfolio at which the G\^{a}teaux derivative \eqref{eqn:gateaux} vanishes in all directions, i.e. $\langle \vomega_K , H_K'(\vpi^*) \rangle = 0$ for all $\vomega_K \in \As_K$. Now, we take the limit as $K$ tends to infinity to conclude that ${\vpi^*}$ is the unique portfolio such that $\langle \vomega , H'(\vpi^*) \rangle = 0$ for all $\vomega \in \As$. To do this we consider the limit
\[
\underset{K \rightarrow \infty}{\lim} ~ \underset{n \rightarrow \infty}{\lim} \, \frac{H_K(\vpi^*_K + \epsilon_n \vomega_K) - H_K(\vpi^*_K)}{\epsilon_n}
\]
where $\left\{ \epsilon_n \right\}_{n\in\NN}$ is a sequence of real numbers tending to zero. Denoting the ratio in the expression above by $a_{nk}$, we note that we can interchange the order of the limits if $\underset{K \rightarrow \infty}{\lim} \, a_{nk} $ exists for all fixed $n$ and $\underset{n \rightarrow \infty}{\lim} \, a_{nk} $ exists for all fixed $K$. The latter is true since $\underset{n \rightarrow \infty}{\lim} \, a_{nk} = \langle \vomega_K , H_K'(\vpi^*) \rangle = 0 $. For the other limit notice that $\vpi_K^* \rightarrow \vpi^*$ in the $\LL^2$ norm as $K \rightarrow \infty$ and so $H_K(\vpi_K^*) \rightarrow H(\vpi^*)$ as $K \rightarrow \infty$ by the fact that $H_K(\vpi^*_K) = H(\vpi_K^*)$ and the continuity of $H$. Therefore,
\begin{align*}
\underset{K \rightarrow \infty}{\lim} \, a_{nk} &= \underset{K \rightarrow \infty}{\lim} \, \frac{H_K(\vpi^*_K + \epsilon_n \vomega_K) - H_K(\vpi^*_K)}{\epsilon_n}
\\
&= \underset{K \rightarrow \infty}{\lim} \, \frac{H(\vpi^*_K + \epsilon_n \vomega_K) - H(\vpi^*_K)}{\epsilon_n}
\\
&= \frac{H(\vpi^* + \epsilon_n \vomega) - H(\vpi^*)}{\epsilon_n}
\end{align*}
is a well-defined limit and thus we can interchange the order of the limits to obtain
\[
\underset{K \rightarrow \infty}{\lim} ~ \underset{n \rightarrow \infty}{\lim} \, \frac{H_K(\vpi^*_K + \epsilon_n \vomega_K) - H_K(\vpi^*_K)}{\epsilon_n} = \underset{n \rightarrow \infty}{\lim} \frac{H(\vpi^* + \epsilon_n \vomega) - H(\vpi^*)}{\epsilon_n} = \langle \vomega , H'(\vpi^*) \rangle \, .
\]
This quantity must be equal to zero since the inner limit in the LHS of the equation above is equal to 0 for each $K \in \NN$. Since $\vpi^*$ is the only stationary point for the functional $H$ and since we have shown that there exists a unique global maximizer, which must also be a stationary point for $H$, ${\vpi^*}$ must be the unique global maximizer.

Finally, the FBSDE \eqref{eqn:FBSDE} is obtained by substituting the optimal control process \eqref{eqn:optCont} in the forward SDE \eqref{eqn:portSDE} and the limiting version of the BSDE \eqref{eqn:BSDE} when $K \rightarrow \infty$ and simplifying. The existence and uniqueness of the optimal control and its dependence on $\vsigma^{\vpi^*}$ in turn implies that the FBSDE \eqref{eqn:FBSDE} must have a unique solution, which completes the proof.
\end{proof}

\section{Specific Utility Functions}

\subsection{Logarithmic Utility - $U(x) = \log x$}

It is easy to check that $F^{\vpi,(k)}_t = (-1)^{k+1} (k-1)!$ from which we can conclude that $\zeta_t = 1$. Since this process is bounded, we may apply Theorem \ref{thm:optCont} to derive the optimal control. Furthermore, the processes $A_t$ and $B_t$ given in Theorem \ref{thm:optCont} vanish and hence the pair $Y^{\vpi^*}_t = 1$ and $\vsigma_t^\vpi = 0$ solve the BSDE in \eqref{eqn:FBSDE}, which is now decoupled from the forward SDE. The optimal portfolio is therefore given by
\[
\vpi_t^* = \mSigma_t^{-1} \vtheta_t \,.
\]
Note that this portfolio is the well-known \textit{growth optimal portfolio}.


\subsection{Power Utility - $U(x) = \frac{x^\eta}{\eta}, ~ \eta < 1$}
In this case, we can check that $F^{\vpi,(k)}_t = (X_t^\vpi)^\eta \prod_{i=1}^{k-1} (\eta - i)  $ and therefore $\zeta_t = \frac{1}{1-\eta}$ is once again bounded. Substituting the relevant expressions in the FBSDE \eqref{eqn:FBSDE} gives the decoupled system
{\small \begin{equation}
\begin{cases}
d X^{\vpi^*}_t  = X^{\vpi^*}_t \big(r_t + \tfrac{1}{1-\eta} \vtheta_t^\intercal \mSigma^{-1}_t \vtheta_t +  \tfrac{1}{1-\eta} \vtheta_t^\intercal \vsigma^{\vpi^*}_t \big) dt + \tfrac{1}{1-\eta} X^{\vpi^*}_t \left( \mSigma_t^{-1} \vtheta_t + \vsigma_t^{\vpi^*} \right)^\intercal d\vM_t \\
X_0^{\vpi^*} = x
\\
d \log Y^{\vpi^*}_t = -\Big( \eta r_t + \tfrac{1}{2} \tfrac{\eta}{1-\eta} \vtheta_t^\intercal \mSigma_t^{-1} \vtheta_t + \tfrac{\eta}{1-\eta} \vtheta_t^\intercal \vsigma_t^{\vpi^*}  + \tfrac{1}{2} \tfrac{1}{1-\eta} (\vsigma_t^{\vpi^*})^\intercal \mSigma_t \vsigma_t^{\vpi^*} \Big)  ~dt + (\vsigma_t^{\vpi^*})^\intercal ~d\vM_t ~
\\
\log Y^{\vpi^*}_T = 0
\end{cases}
\end{equation}}%
Next, notice that if we apply It\^{o}'s lemma to rewrite the BSDE in geometric form this gives
\begin{equation*}
\begin{cases}
\frac{d Y^{\vpi^*}_t}{Y^{\vpi^*}_t} = -\Big( \eta r_t + \tfrac{1}{2} \tfrac{\eta}{1-\eta} \vtheta_t^\intercal \mSigma_t^{-1} \vtheta_t + \tfrac{\eta}{1-\eta} \vtheta_t^\intercal \vsigma_t^{\vpi^*}  + \tfrac{1}{2} \tfrac{\eta}{1-\eta} (\vsigma_t^{\vpi^*})^\intercal \mSigma_t \vsigma_t^{\vpi^*} \Big)  ~dt + (\vsigma_t^{\vpi^*})^\intercal ~d\vM_t ~
\\
Y_T^{\vpi^*}= 1
\end{cases}
\end{equation*}
which is identical to the BSDE derived for the power utility case in \cite{ferland2008fbsde} when the noise process is a Brownian motion; with $Y^{\vpi^*}_t$ playing the role of $p_t$ and $\vsigma_t^{\vpi^*}$ replacing $\mSigma^{-1/2}_t \frac{\Lambda_t}{p_t}$. Therefore, we can lean on the results obtained in that paper, in particular, the form of the optimal solution when the model parameters are deterministic.

\subsection{Exponential Utility - $U(x) = -e^{-\gamma x}$}
In this case, we have $F^{\vpi,(k)}_t = (-1)^{k+1} \gamma^k (X_t^\vpi)^k e^{-\gamma X_t^\vpi}$ and therefore $\zeta_t = \frac{ 1 }{ \gamma X_t^\vpi }$ and $\zeta_t \phi_t = 1$. In this case, we find that $\zeta_t X_t^\vpi = \frac{ 1 }{ \gamma }$ is constant. Moreover, the FBSDE in \eqref{eqn:FBSDE} reduces to
{\small
\begin{equation}
\begin{cases}
d X^{\vpi^*}_t  =  \big(r_t X^{\vpi^*}_t + \tfrac{1}{\gamma} \vtheta_t^\intercal \mSigma^{-1}_t \vtheta_t +  \tfrac{1}{\gamma} \vtheta_t^\intercal \vsigma^{\vpi^*}_t \big) dt + \tfrac{1}{\gamma}  \left( \mSigma_t^{-1} \vtheta_t + \vsigma_t^{\vpi^*} \right)^\intercal d\vM_t
\\
X_0^{\vpi^*} = x
\\
d \log Y^{\vpi^*}_t = -\Big( (1 - \gamma X_t^{\vpi^*}) r_t - \tfrac{1}{2} \vtheta_t^\intercal \mSigma_t^{-1} \vtheta_t - \vtheta_t^\intercal \vsigma_t^{\vpi^*}  \Big)  ~dt + (\vsigma_t^{\vpi^*})^\intercal ~d\vM_t ~
\\
\log Y^{\vpi^*}_T = 0
\end{cases}
\end{equation}
}%
and it is not difficult to verify that the forward SDE has a unique square-integrable solution so that condition (b) of Theorem \ref{thm:optCont} is satisfied.
Now, if we define a new process by $\log \widetilde{Y}_t^{\vpi^*} = \log Y_t^{\vpi^*} + \int_t^T r_s ~ds$, we find that FBSDE system for $X^{\vpi^*}$ and $\widetilde{Y}^{\vpi^*}$ is
\begin{equation}
\begin{cases}
d X^{\vpi^*}_t  =  \big(r_t X^{\vpi^*}_t + \tfrac{1}{\gamma} \vtheta_t^\intercal \mSigma^{-1}_t \vtheta_t +  \tfrac{1}{\gamma} \vtheta_t^\intercal \vsigma^{\vpi^*}_t \big) dt + \tfrac{1}{\gamma}  \left( \mSigma_t^{-1} \vtheta_t + \vsigma_t^{\vpi^*} \right)^\intercal d\vM_t
\\
X_0^{\vpi^*} = x
\\
d \log \widetilde{Y}^{\vpi^*}_t = \Big( \gamma X_t^{\vpi^*} r_t + \tfrac{1}{2} \vtheta_t^\intercal \mSigma_t^{-1} \vtheta_t + \vtheta_t^\intercal \vsigma_t^{\vpi^*}  \Big)  ~dt + (\vsigma_t^{\vpi^*})^\intercal ~d\vM_t ~
\\
\log \widetilde{Y}^{\vpi^*}_T = 0
\end{cases}
\end{equation}
which can be mapped to Equation (16) of \cite{ferland2008fbsde} when the noise process is a Brownian motion. Once again we refer the reader to that paper for detailed results.

\section{Conclusions}

In this paper we have solved the Merton problem of maximizing the expected utility of terminal wealth using variational analysis techniques, providing sufficient \textit{and necessary} conditions for the optimal solution in terms of the solution to a quadratic FBSDE. One extension in which this approach would be useful is in the setting of partial information where the noise drivers of asset prices are general martingales. In this case, the usual approach using the stochastic maximum principle would not lead to a solution. Some of the present results would also be useful in extending the results of the outperformance and tracking problem discussed in \cite{alaradi2018} and \cite{alaradi2019active} to the partial information setting where more general utility functions are considered. It should also be possible to extend the results of the current paper to the case where the noise processes may jump.

\bibliographystyle{chicago}
\bibliography{merton_problem_convex_analysis}

\begin{thebibliography}{}

\bibitem[\protect\citeauthoryear{Al-Aradi and Jaimungal}{Al-Aradi and
  Jaimungal}{2018}]{alaradi2018}
Al-Aradi, A. and S.~Jaimungal (2018).
\newblock Outperformance and tracking: Dynamic asset allocation for active and
  passive portfolio management.
\newblock {\em Applied Mathematical Finance\/}~{\em 25\/}(3), 268--294.

\bibitem[\protect\citeauthoryear{Al-Aradi and Jaimungal}{Al-Aradi and
  Jaimungal}{2019}]{alaradi2019active}
Al-Aradi, A. and S.~Jaimungal (2019).
\newblock Active and passive portfolio management with latent factors.
\newblock {\em arXiv preprint arXiv:1903.06928\/}.

\bibitem[\protect\citeauthoryear{Ang, Papanikolaou, and Westerfield}{Ang
  et~al.}{2014}]{Ang2014}
Ang, A., D.~Papanikolaou, and M.~M. Westerfield (2014).
\newblock Portfolio choice with illiquid assets.
\newblock {\em Management Science\/}~{\em 60\/}(11), 2737--2761.

\bibitem[\protect\citeauthoryear{Bank, Soner, and Vo{\ss}}{Bank
  et~al.}{2017}]{bank2017hedging}
Bank, P., H.~M. Soner, and M.~Vo{\ss} (2017).
\newblock Hedging with temporary price impact.
\newblock {\em Mathematics and Financial Economics\/}~{\em 11\/}(2), 215--239.

\bibitem[\protect\citeauthoryear{Bank and Vo{\ss}}{Bank and
  Vo{\ss}}{2019}]{bank2019optimal}
Bank, P. and M.~Vo{\ss} (2019).
\newblock Optimal investment with transient price impact.
\newblock {\em SIAM Journal on Financial Mathematics\/}~{\em 10\/}(3),
  723--768.

\bibitem[\protect\citeauthoryear{B{\"a}uerle and Rieder}{B{\"a}uerle and
  Rieder}{2007}]{bauerle2007}
B{\"a}uerle, N. and U.~Rieder (2007).
\newblock Portfolio optimization with jumps and unobservable intensity process.
\newblock {\em Mathematical Finance\/}~{\em 17\/}(2), 205--224.

\bibitem[\protect\citeauthoryear{Blanchet-Scalliet, Karoui, Jeanblanc, and
  Materllini}{Blanchet-Scalliet et~al.}{2008}]{BlanchetScaillet2008}
Blanchet-Scalliet, C., N.~E. Karoui, M.~Jeanblanc, and L.~Materllini (2008).
\newblock Optimal investment decisions when time-horizon is uncertain.
\newblock {\em Journal of Mathematical Economics\/}~{\em 44\/}(11), 1100--1113.

\bibitem[\protect\citeauthoryear{Casgrain and Jaimungal}{Casgrain and
  Jaimungal}{2018}]{casgrain2018algorithmic}
Casgrain, P. and S.~Jaimungal (2018).
\newblock Algorithmic trading with partial information: A mean field game
  approach.
\newblock {\em arXiv preprint arXiv:1803.04094\/}.

\bibitem[\protect\citeauthoryear{Cvitani{\'c} and Karatzas}{Cvitani{\'c} and
  Karatzas}{1992}]{cvitanic1992convex}
Cvitani{\'c}, J. and I.~Karatzas (1992).
\newblock Convex duality in constrained portfolio optimization.
\newblock {\em The Annals of Applied Probability\/}, 767--818.

\bibitem[\protect\citeauthoryear{Davis and Norman}{Davis and
  Norman}{1990}]{Davis90}
Davis, M. and A.~Norman (1990).
\newblock Portfolio selection with transaction costs.
\newblock {\em Mathematics of Operations Research\/}~{\em 15\/}(4), 676--713.

\bibitem[\protect\citeauthoryear{Ekeland and T\'{e}mam}{Ekeland and
  T\'{e}mam}{1999}]{ekeland1999convex}
Ekeland, I. and R.~T\'{e}mam (1999).
\newblock {\em Convex analysis and variational problems}, Volume~28.
\newblock Siam.

\bibitem[\protect\citeauthoryear{Ferland and Watier}{Ferland and
  Watier}{2008}]{ferland2008fbsde}
Ferland, R. and F.~Watier (2008).
\newblock {FBSDE} approach to utility portfolio selection in a market with
  random parameters.
\newblock {\em Statistics \& Probability Letters\/}~{\em 78\/}(4), 426--434.

\bibitem[\protect\citeauthoryear{Frey, Gabih, and Wunderlich}{Frey
  et~al.}{2012}]{frey2012}
Frey, R., A.~Gabih, and R.~Wunderlich (2012).
\newblock Portfolio optimization under partial information with expert
  opinions.
\newblock {\em International Journal of Theoretical and Applied Finance\/}~{\em
  15\/}(01), 1250009.

\bibitem[\protect\citeauthoryear{Henderson}{Henderson}{2002}]{henderson2002valuation}
Henderson, V. (2002).
\newblock Valuation of claims on nontraded assets using utility maximization.
\newblock {\em Mathematical Finance\/}~{\em 12\/}(4), 351--373.

\bibitem[\protect\citeauthoryear{Henderson and Hobson}{Henderson and
  Hobson}{2004}]{henderson2004utility}
Henderson, V. and D.~Hobson (2004).
\newblock Utility indifference pricing-an overview.
\newblock {\em Volume on Indifference Pricing\/}.

\bibitem[\protect\citeauthoryear{Karatzas, Lehoczky, Shreve, and Xu}{Karatzas
  et~al.}{1991}]{karatzas1991martingale}
Karatzas, I., J.~P. Lehoczky, S.~E. Shreve, and G.-L. Xu (1991).
\newblock Martingale and duality methods for utility maximization in an
  incomplete market.
\newblock {\em SIAM Journal on Control and optimization\/}~{\em 29\/}(3),
  702--730.

\bibitem[\protect\citeauthoryear{Kobylanski et~al.}{Kobylanski
  et~al.}{2000}]{kobylanski2000backward}
Kobylanski, M. et~al. (2000).
\newblock Backward stochastic differential equations and partial differential
  equations with quadratic growth.
\newblock {\em the Annals of Probability\/}~{\em 28\/}(2), 558--602.

\bibitem[\protect\citeauthoryear{Lim}{Lim}{2004}]{lim2004quadratic}
Lim, A.~E. (2004).
\newblock Quadratic hedging and mean-variance portfolio selection with random
  parameters in an incomplete market.
\newblock {\em Mathematics of Operations Research\/}~{\em 29\/}(1), 132--161.

\bibitem[\protect\citeauthoryear{Luo and Tangpi}{Luo and
  Tangpi}{2015}]{luo2015solvability}
Luo, P. and L.~Tangpi (2015).
\newblock Solvability of coupled fbsdes with quadratic and superquadratic
  growth.
\newblock {\em arXiv preprint arXiv:1505.01796\/}.

\bibitem[\protect\citeauthoryear{Luo and Tangpi}{Luo and
  Tangpi}{2017}]{luo2017solvability}
Luo, P. and L.~Tangpi (2017).
\newblock Solvability of coupled fbsdes with diagonally quadratic generators.
\newblock {\em Stochastics and Dynamics\/}~{\em 17\/}(06), 1750043.

\bibitem[\protect\citeauthoryear{Magill and Constantinides}{Magill and
  Constantinides}{1976}]{Magill76}
Magill, M. and G.~Constantinides (1976).
\newblock Portfolio selection with transaction costs.
\newblock {\em J. Economic Theory\/}~{\em 13}, 245--263.

\bibitem[\protect\citeauthoryear{Merton}{Merton}{1969}]{Merton69}
Merton, R. (1969).
\newblock Lifetime portfolio selection under uncertainty: The continuous-time
  case.
\newblock {\em Review of Economics and Statistics\/}~{\em LI}, 247--257.

\bibitem[\protect\citeauthoryear{Merton}{Merton}{1971}]{Merton71}
Merton, R. (1971).
\newblock Optimal consumption and portfolio rules in a continuous-time model.
\newblock {\em J. Economic Theory\/}~{\em 3}, 373--413.

\bibitem[\protect\citeauthoryear{Morlais}{Morlais}{2009}]{morlais2009quadratic}
Morlais, M.-A. (2009).
\newblock Quadratic bsdes driven by a continuous martingale and applications to
  the utility maximization problem.
\newblock {\em Finance and Stochastics\/}~{\em 13\/}(1), 121--150.

\bibitem[\protect\citeauthoryear{Protter}{Protter}{2005}]{protter2005stochastic}
Protter, P.~E. (2005).
\newblock Stochastic differential equations.
\newblock In {\em Stochastic integration and differential equations}, pp.\
  249--361. Springer.

\bibitem[\protect\citeauthoryear{Rieder and Wopperer}{Rieder and
  Wopperer}{2012}]{rieder2012robust}
Rieder, U. and C.~Wopperer (2012).
\newblock Robust consumption-investment problems with random market
  coefficients.
\newblock {\em Mathematics and Financial Economics\/}~{\em 6\/}(4), 295--311.

\bibitem[\protect\citeauthoryear{Schachermayer}{Schachermayer}{2002}]{schachermayer2002optimal}
Schachermayer, W. (2002).
\newblock Optimal investment in incomplete financial markets.
\newblock In {\em Mathematical Finance—Bachelier Congress 2000}, pp.\
  427--462. Springer.

\bibitem[\protect\citeauthoryear{Tahar, Soner, and Touzi}{Tahar
  et~al.}{2010}]{tahar2010merton}
Tahar, I.~B., H.~M. Soner, and N.~Touzi (2010).
\newblock Merton problem with taxes: Characterization, computation, and
  approximation.
\newblock {\em SIAM Journal on Financial Mathematics\/}~{\em 1\/}(1), 366--395.

\end{thebibliography}

\end{document}